\newcommand{\JHnewpage}{} 
\newcommand{\NN}{\nonumber}
\newcommand{\NNL}{\nonumber\\}
\newcommand{\SNR}{{\scriptscriptstyle\textsf{snr}}}
\newcommand{\INR}{{\scriptscriptstyle\textsf{inr}}}
\newcommand{\OIA}{{\scriptscriptstyle\textsf{OIA}}}
\newcommand{\SINR}{{\scriptscriptstyle\textsf{sinr}}}
\newcommand{\DoF}[1]{\lim_{P\to\infty}\frac{#1}{\log_2 P}}
\newcommand{\tDoF}[1]{\underset{P\to\infty}{\lim} \frac{#1}{\log_2 P}}
\newcommand{\h}[1]{\mathbf{h}_{#1}}
\newcommand{\PR}[1]{\mathrm{Pr}\left[#1\right]}
\newcommand{\argmin}[1]{\underset{#1}{\arg\min}}
\newcommand{\argmax}[1]{\underset{#1}{\arg\max}}
\newcommand{\R}{\mathcal{R}}
\newtheorem{theorem}{Theorem}
\newtheorem{lemma}{Lemma}
\newtheorem{remark}{Remark}
\title{Multiuser Diversity in Interfering Broadcast Channels: Achievable
Degrees of Freedom and User Scaling Law}
\author{\IEEEauthorblockN{Jung~Hoon~Lee},~\IEEEmembership{Student
Member,~IEEE}, \IEEEauthorblockN{Wan~Choi},~\IEEEmembership{Senior
Member,~IEEE}, and
\IEEEauthorblockN{Bhaskar~D.~Rao},~\IEEEmembership{Fellow,~IEEE}
%
%
%
\thanks{This work was supported by the Korea Research Foundation Grant
funded by the Korean Government (NRF-2012-013-2012S1A2A1A01031507).}
\thanks{J. H. Lee and W.~Choi are with Department of Electrical Engineering,
        Korea Advanced Institute of Science and Technology (KAIST), Daejeon
        305-701, Korea (e-mail: tantheta@kaist.ac.kr, wchoi@ee.kaist.ac.kr)}
\thanks{B. D. Rao is with Department of Electrical and Computer
Engineering, University of California, San Diego, La Jolla, CA
92093-0407 USA (e-mail: brao@ece.ucsd.edu).}}
\begin{document}
\maketitle

\vspace{-.6in}
\begin{abstract}
This paper investigates how multiuser dimensions can effectively be
exploited for target degrees of freedom (DoF) in interfering
broadcast channels (IBC) consisting of $K$-transmitters and their
user groups. First, each transmitter is assumed to have a single
antenna and serve a singe user in its user group where each user has
receive antennas less than $K$. In this case, a $K$-transmitter
single-input multiple-output (SIMO) interference channel (IC) is
constituted after user selection. Without help of multiuser
diversity, $K-1$ interfering signals cannot be perfectly removed at
each user since the number of receive antennas is smaller than or
equal to the number of interferers. Only with proper user selection,
non-zero DoF per transmitter is achievable as the number of users
increases. Through geometric interpretation of interfering channels,
we show that the multiuser dimensions have to be used first for
reducing the DoF loss caused by the interfering signals, and then
have to be used for increasing the DoF gain from its own signal. The
sufficient number of users for the target DoF is derived. We also
discuss how the optimal strategy of exploiting multiuser diversity
can be realized by practical user selection schemes. Finally, the
single transmit antenna case is extended to the multiple-input
multiple-output (MIMO) IBC where each transmitter with multiple
antennas serves multiple users.
\end{abstract}

\begin{IEEEkeywords}
Multiuser diversity, degrees of freedom, interference alignment
measure, interfering broadcast channel
\end{IEEEkeywords}

\newpage

\section{Introduction}
Interference is a major performance-limiting factor in modern
wireless communication systems. Many interference mitigation
strategies have been proposed to improve network spectral
efficiency. By allowing partial or full cooperation among
interfering base stations, interference can effectively be managed
and spectral efficiency can be improved. Joint beamforming
\cite{DY2010} and network MIMO (or multicell processing)
\cite{GHHSSY2010} among base stations have been shown to be
effective interference mitigation techniques. However, if
cooperation among transmitters is not allowed, orthogonal multiple
access has been a traditional solution to interference. In a
$K$-user single-input single-output (SISO) interference channel
(IC), for example, each user can achieve $1/K$ degrees of freedom
(DoF) by time division multiple access.

In recent years, interference alignment (IA) techniques have
received much attention \cite{CJ2008, GCJ2008, ST2008, GJ2008}. The
basic concept of IA is to align the interfering signals in a small
dimensional subspace. In a $K$-user SISO IC, $K/2$ DoF have been
shown to be achievable using IA \cite{CJ2008}. Although IA provides
a substantial asymptotic capacity gain in interference channels,
there are many practical challenges for implementation.
IA requires global channel state information at the transmitter
(CSIT), and imperfect channel knowledge severely degrades the gain
of IA. In some channel configurations, symbols should be extended in
the time/frequency domain to align interfering signals. The high
computational complexity is also considered as a major challenge. To
ameliorate these difficulties, many IA algorithms have been proposed
such as iterative IA \cite{GCJ2008} and a subspace IA \cite{ST2008}.

For interference suppression, multiuser diversity can also be
exploited by opportunistic user selection for minimizing
interference. The interference reduction by multiuser diversity can
be enjoyed without heavy burden on global channel knowledge because
user selection in general requires only a small amount of feedback
\cite{SH2005, HR2013, JPS2012, LC2010, LC2011, LC2013, LC2011_2,
KLC2011}.
In this context, opportunistic interference alignment (OIA) has been
recently proposed in \cite{LC2010} and has attracted much attention.
In a 3-transmitter $M\times 2M$ MIMO interfering broadcast channel
(IBC), the authors of \cite{LC2013} proved that $\alpha M$ (where
$\alpha\in[0,1]$) DoF per transmitter is achievable when the number
of users scales as $P^{\alpha M}$. In \cite{LC2011_2} and
\cite{KLC2011}, a $K$-transmitter $1\times 3$ SIMO IBC and a
$K$-transmitter $1\times (K-1)$ SIMO IBC have been studied,
respectively. For SIMO interfering multiple access channel (IMAC)
constituted by $K$-cell uplink channels with $M$ transmit antennas
and single antenna users, the authors of \cite{JPS2012} showed that
$KM$ DoF are achievable when the number of users scales as
$P^{(K-1)M}$.
In these schemes, user dimensions are used to align the interfering
signals; each transmitter opportunistically selects a user whose
interfering signals are most aligned among the users associated with
the transmitter. Contrary to the conventional opportunistic user
selection techniques \cite{VTL2002, SH2005, HR2013, CFAH2007,
CA2008, JS2010}, the OIA scheme exploits the multiuser dimensions
for interference alignment.

In this paper, we investigate the optimal role of multiuser
diversity for the target DoF in the IBC with $K$-transmitters and
generalize the results of \cite{LC2011_2}\cite{KLC2011}.
For the $K$-transmitter SIMO IBC, each transmitter selects and
serves a single user in its user group consisting of $N$ users. Once
after $K$-transmitters select their serving users, a $K$-user IC is
constructed. Each user has $N_r$ antennas less than or equal to the
number of interferers, i.e., $N_r \le K-1$. Thus, without help of
multiuser diversity, interference at each user cannot be perfectly
removed so that the achievable rate of each transmitter goes to zero
as signal-to-noise ratio (SNR) increases. Consequently, the
achievable DoF per transmitter becomes zero. However, non-zero DoF
per transmitter is achievable by exploiting multiuser diversity as
the number of users increases.

Since opportunistic user selection can focus on either enhancing the
desired signal or decreasing interference, non-zero DoF can be
obtained by properly enhancing the desired signal strength and
reducing interference via user selection. That is, the non-zero DoF
$d$ comprises a DoF gain term $d_1\ge0$ from the desired signal and
a DoF loss term $d_2\ge0$ caused by interference such that $d_1 -
d_2 = d$, and the target DoF $d$ can be obtained by a proper
combination of $d_1$ and $d_2$. However, many questions remain
unsolved; what is the feasible and optimal combination of $(d_1,
d_2)$ for the target DoF $d ~(=d_1-d_2)$ and what is the sufficient
number of users for the target DoF achieving strategy. We answer
these fundamental questions and analytically investigate how the
multiuser dimensions can be optimally exploited for the target DoF
in the IBC. Specifically, from geometric interpretation of
interfering channels, we define an interference alignment measure
that indicates how well interference signals are aligned at each
user.

Using the interference alignment measure, we first consider the $K$-
transmitters SIMO IBC and show that the DoF gain term $d_1$ can be
achieved if the number of users scales in terms of transmit power
$P$ as $N\propto e^{P^{(d_1-1)}}$ and the DoF loss term can be
reduced to $d_2$ if the number of users scales as $N\propto
P^{(1-d_2)(K-N_r)}$. From these results, we find the optimal
strategy of exploiting multiuser diversity for the target DoF $d$ in
terms of the required number of users; the optimal target DoF
achieving strategies $(d_1^\star, d_2^\star)$ are $(1,1-d)$ and
$(d,0)$ for the target DoF $d\in[0,1]$ and $d ~(> 1)$, respectively.
We also investigate how the optimal target DoF achieving strategy
$(d_1^\star, d_2^\star)$ can be realized by practical user selection
schemes. Then, we extend our results to the $K$-transmitter MIMO IBC
where each transmitter has $N_t$ multiple antennas and serves
multiple users with $N_r$ receive antenna each. Our generalized key
findings are summarized as follows:
\begin{itemize}
\item For the target DoF $d\in[0,N_t]$, $(d_1^\star, d_2^\star) = (N_t,N_t-d)
$ is the optimal target DoF achieving strategy that minimizes the
required number of users. That is, the multiuser dimensions should
be exploited to make the DoF loss $N_t-d$. The sufficient number of
users for this strategy scales like $N\propto P^{(d/N_t)
(KN_t-N_r)}$.
\item For the target DoF $d ~(>N_t)$, $(d_1^\star, d_2^\star) = (d,0) $ is the
optimal target DoF achieving strategy which minimizes the required
number of users. That is, the multiuser dimensions should be
exploited to make the DoF loss term zero as well as to make the DoF
gain term $d$. The sufficient number of users for this strategy
scales like $N\propto e^{P^{(d/N_t-1)}}P^{(KN_t-N_r)}$.
\end{itemize}
The rest of this paper is organized as follows. In Section II, we
describe the system model. In Section III, a geometric
interpretation of interfering channels is provided, and the
interference alignment measure is defined. Section IV derives the
optimal strategies of achieving the target DoF in terms of the
required number of users. In Section V, we show how various
practical user selection schemes exploit multiuser diversity for the
target DoF and discuss their optimality to achieve the target DoF.
The system model is extended for the MIMO IBC in Section VI.
Numerical results are shown in Section VII, and we conclude our
paper in Section VIII.

\noindent -- \emph{Notations}

Throughout the paper, we use boldface to denote vectors and
matrices. The notations $\mathbf{A}^\dagger$,
$\Lambda_i(\mathbf{A})$, and $V_i(\mathbf{A})$ denote the conjugate
transpose, the $i$th largest eigenvalue, and the eigenvector of
matrix $\mathbf{A}$ corresponding to the $i$th largest eigenvalue.
For convenience, the smallest eigenvalue, the largest eigenvalue,
and the eigenvectors corresponding eigenvectors of $\mathbf{A}$ are
denoted as $\Lambda_{\min}(\mathbf{A})$, $\Lambda_{\max}
(\mathbf{A})$, $V_{\min}(\mathbf{A})$, and $V_{\max}(\mathbf{A})$,
respectively.
Also, $\mathbf{I}_{n}$, $\mathbb{C}^{n}$, and $\mathbb{C}^{m\times
n}$ indicate the $n\times n$ identity matrix, the $n$-dimensional
complex space, and the set of $m\times n$ complex matrices,
respectively.

\JHnewpage
\section{Problem Formulation}
\subsection{System Model}\label{sec:system_model}


Our system model is depicted in Fig. \ref{fig:system_model}. The
system corresponds to the interfering broadcast channel (IBC) of
which capacity is unknown.
There are $K$ transmitters with $N_t$ transmit antennas each, and
each transmitter has its own user group consisting of $N$ users with
$N_r$ antennas each.
First, each transmitter is assumed to have a single antenna, i.e.,
$N_t=1$, and serves a single user selected in its user group so that
$K$-transmitter SIMO IC is opportunistically constituted.
The system model with multiple transmit antennas (i.e., $N_t>1$)
becomes statistically identical with the single transmit antenna
model if each transmitter uses a random precoding vector.
In Section \ref{sec:extension}, we extend our system model to the
$K$-transmitter MIMO IBC where each transmitter with multiple
antennas serves the multiple users through orthonormal random beams.
%


In this paper, we focus on the cases that the number of receive
antennas is smaller than the number of transmitters, i.e., $N_r <
K$. Otherwise (i.e., if $N_r \ge K$), each user can suppress all
interfering signals through zero-forcing like schemes so that DoF
one is trivially guaranteed at each transmitter. We also assume that
collaboration or information sharing among the transmitters is not
allowed. Since the user selection at each transmitter is independent
of the other transmitters', we only consider the achievable rate of
the first transmitter without loss of generality. Note that the
average achievable rate per transmitter will be same if the
configurations of the transmitters are identical.

At the first transmitter, the received signal at the $n$th user
denoted by $\mathbf{y}_n\in\mathbb{C}^{N_r\times1}$ is given by
\begin{align}
    \mathbf{y}_{n}
    &=\mathbf{h}_{n,1} x_1
        + \sum_{k=2}^{K}\mathbf{h}_{n,k} x_k
        + \mathbf{z}_n,\NN
\end{align}
where $\mathbf{h}_{n,k} \in \mathbb{C}^{N_r\times 1}$ is the vector
channel from the $k$th transmitter to the $n$th user whose elements
are independent and identically distributed (i.i.d.) circularly
symmetric complex Gaussian random variables with zero means and unit
variance. Also, $x_k\in \mathbb{C}^{1\times 1}$ is the transmitted
signal using random Gaussian codebook from the $k$th transmitter
such that $\mathbb{E}\vert x_k\vert^2 = P$, where $P$ is the power
budget at each transmitter. Also, $\mathbf{z}_n \in \mathbb{C} ^{N_r
\times 1}$ is a circularly symmetric complex Gaussian noise with
zero mean and an identity covariance matrix, i.e., $\mathbf{z}_n
\sim \mathcal{CN} (0,\mathbf{I}_{N_r})$. Assuming perfect channel
estimation at each receiver, the channel state information
$\{\mathbf{h} _{n,k}\}_{k=1}^K$ is available at the $n$th user.

The received signal is postprocessed at each user using multiple
receive antennas.
%
%
Let $\mathbf{v}_n \in \mathbb{C}^{N_r \times 1}$ be the
postprocessing vector of the $n$th user such that $\Vert
\mathbf{v}_n\Vert^2=1$.
Then, the received signal after postprocessing becomes
\begin{align}
 \mathbf{v}_n^\dagger\mathbf{y}_{n}
    &=\mathbf{v}_n^\dagger\mathbf{h}_{n,1} x_1
        + \sum_{k=2}^{K}\mathbf{v}_n^\dagger \mathbf{h}_{n,k} x_k
        + \mathbf{v}_n^\dagger\mathbf{z}_n.
        \label{eqn:vy_n}
\end{align}

To aid user selection at the transmitter, each user feeds one scalar
value back to the transmitter. Various user selection criteria and
corresponding feedback information will be discussed in the
following sections.
Since no information is shared among the transmitters, each
transmitter independently selects a single user based on the
collected information.

Let $n^*$ be the index of the selected user at the first
transmitter.
Then, the average achievable rate of the first transmitter is given
by
\begin{align}
 \R \triangleq \mathbb{E}\log_2
    \left(1+
    \frac{ P\vert \mathbf{v}_{n^*}^\dagger\h{n^*,1} \vert^2}
    {
        1 + P \sum_{k=2}^{K} \vert
        \mathbf{v}_{n^*}^\dagger \h{n^*,k} \vert^2
    }\right). \label{eqn:C_n}
\end{align}
We decompose $\R$ into two terms $\R^+$ and $\R^-$ such that $\R =
\R^+ - \R^-$, which are given, respectively, by
\begin{align}
 \R^+
    &= \mathbb{E}\log_2\left(1+  P \sum_{k=1}^{K}
    \vert \mathbf{v}_{n^*}^\dagger \h{n^*,k} \vert^2\right),
    \label{eqn:Rgain}\\
 \R^-
    &= \mathbb{E}\log_2\left(1+  P \sum_{k=2}^{K}
    \vert \mathbf{v}_{n^*}^\dagger \h{n^*,k} \vert^2\right).
    \label{eqn:Rloss}
\end{align}
Then, the achievable DoF of the first transmitter becomes
\begin{align}
 \DoF{\R} =  \DoF{\R^+} - \DoF{\R^-}.
\end{align}
We call $\tDoF{\R^+}$ and $\tDoF{\R^-}$ as \emph{DoF gain term} and
\emph{DoF loss term}, respectively.

\subsection{Problem Description}\label{sec:problem_description}

The achievable rate of each transmitter depends on the number of
users because multiuser dimensions are exploited for a rate
increase. When there are fixed number of users, the achievable rate
of each transmitter will be saturated in the high SNR region due to
interferences because the number of receive antennas at each user is
smaller than the number of total transmitters. Consequently, the
first transmitter cannot obtain any DoF, i.e.,
\begin{align}
 \lim_{P\to\infty \atop \textrm{Fixed~}N}
    \frac{\R}{\log_2 P} = 0.
    \label{eqn:zero_DoF}
\end{align}
In this case, both the DoF gain term and the DoF loss term become
one, i.e.,
\begin{align}
 \left(\lim_{P\to\infty \atop \textrm{Fixed~}N} \frac{\R^+}{\log_2P},
 \lim_{P\to\infty \atop \textrm{Fixed~}N} \frac{\R^-}{\log_2P}\right)
 = (1,1).
 \label{eqn:DoF_fixedN}
\end{align}
On the other hand, when the transmit power is fixed, the achievable
rate of the selected user can increase to infinity as the number of
users increases, i.e.,
\begin{align}
 \lim_{N\to \infty \atop \textrm{Fixed~}P}\R = \infty.
    \label{eqn:infinity_rate}
\end{align}
Then, how much DoF can be achieved when both the number of users and
the transmit power increase?
Obviously, non-zero DoF can be obtained by exploiting multiuser
dimensions, and the achievable DoF
\begin{align}
 \lim_{N\to\infty}
 \left[\lim_{P\to\infty}
     \frac{\R}{\log_2P}
 \right]
 \label{eqn:opportunistic_DoF}
\end{align}%
will depend on the increasing speeds of $N$ and $P$.
In this case, DoF $d~(>0)$ at the first transmitter comprises the
DoF gain term $d_1~(\ge0)$ and the DoF loss term $d_2~(\ge0)$ such
that $d_1-d_2=d$, i.e.,
\begin{align}
 (d_1, d_2) \triangleq \left(\DoF{\R^+}, \DoF{\R^-}\right).
\end{align}
We call $(d_1, d_2)$ as a \emph{target DoF achieving strategy} if $d
= d_1-d_2$ for the target DoF $d$.
Since each strategy requires different user scaling, we need to find
the optimal DoF achieving strategy that exploits multiuser diversity
most efficiently, i.e., which requires the minimum user scaling. For
the \emph{target DoF} per transmitter $d~ (>0)$, we find the optimal
target DoF achieving strategy $(d_1^\star, d_2^\star)$ satisfying
$d_1^\star - d_2^\star=d$ and derive the required user scaling.
Note that the definition of DoF in this paper is extended from the
conventional definition of DoF in order to properly capture
multiuser diversity gain in terms of achievable rate. Achievable DoF
defined in \eqref{eqn:opportunistic_DoF} depends on increasing
speeds of $N$ and $P$; and can have non-zero values even larger than
one if the number of users properly scales with the transmit power.

\subsection{DoF Achieving Strategies and Reduced Set of Candidates for the Optimal Strategy}

From the definitions of the rate gain term and the rate loss term
given in \eqref{eqn:Rgain} and \eqref{eqn:Rloss}, respectively, the
strategies which achieve the target DoF $d$ are given by
\begin{align}
 \big\{
    (d_1, d_2) ~\vert~ d_1-d_2=d,~d_1\ge d_2,~d_1\ge0,~d_2\ge 0
 \big\}.\label{eqn:DoF_strategy_set0}
\end{align}

The following lemma shows that we do not need to consider all of the
candidate strategies in \eqref{eqn:DoF_strategy_set0} but take into
account only a subset of \eqref{eqn:DoF_strategy_set0} to find the
optimal target DoF achieving strategy.

\begin{lemma}\label{lemma:DoF_terms_range}
For any non-negative target DoF, the optimal DoF achieving strategy
is in the set
\begin{align}
 \big\{
    (d_1, d_2) ~\vert~ d_1\in[1,\infty),~d_2\in[0,1],~d_1-d_2=d
 \big\}.
 \label{eqn:DoF_term_range}
\end{align}
\end{lemma}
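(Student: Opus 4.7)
The plan is to show by direct domination that any candidate strategy $(d_1, d_2) \in$ \eqref{eqn:DoF_strategy_set0} lying outside the restricted set \eqref{eqn:DoF_term_range} is out-performed, in the sense of required user-scaling, by an alternative strategy inside the set; hence the optimal target-DoF achieving strategy must belong to \eqref{eqn:DoF_term_range}.

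The starting point is the baseline identity \eqref{eqn:DoF_fixedN}: with a single user ($N=1$) and no selection, one already has $(d_1, d_2)=(1,1)$, so both $d_1=1$ and $d_2=1$ are attained for free, without spending any multiuser dimension. Achieving $d_1>1$ requires selection that boosts the sum $\sum_{k=1}^K|\mathbf{v}_{n^*}^\dagger\mathbf{h}_{n^*,k}|^2$ beyond its typical $\Theta(1)$ size, and achieving $d_2<1$ requires selection that suppresses the interference sum $\sum_{k=2}^K|\mathbf{v}_{n^*}^\dagger\mathbf{h}_{n^*,k}|^2$ below its typical size; both cost multiuser dimensions, whereas moving in the opposite direction ($d_1<1$ or $d_2>1$) either wastes those dimensions or actively degrades the rate.

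I would then handle the two boundary constraints separately. For the bound $d_1\ge 1$: suppose $d_1<1$. Since $d_1-d_2=d\ge 0$, this forces $d_2=d_1-d<1-d$. The alternative $(d_1',d_2')=(1,1-d)$, which lies in \eqref{eqn:DoF_term_range} (note $d\le d_1<1$), attains the same $d$ with $d_1'$ at its free baseline and a \emph{larger} DoF loss $d_2'=1-d>d_2$; since any selection rule that suppresses interference to attain $d_2$ trivially also attains any larger loss, the alternative needs no more users. For the bound $d_2\le 1$: suppose $d_2>1$. Then $d_1=d+d_2>d+1$. The alternative $(d_1',d_2')=(1,1-d)$ if $d\le 1$, or $(d,0)$ if $d>1$, in \eqref{eqn:DoF_term_range} has strictly smaller $d_1'=\max(1,d)<d_1$ and $d_2'\le 1<d_2$, so it requires neither the aggressive signal boosting nor the interference inflation built into the original strategy; again no more users are needed.

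The main obstacle is making the ``no more users'' comparison rigorous without invoking the user-scaling formulas derived later. I would handle this by a coupling argument at the level of selection rules: given a rule $\mathcal{U}$ achieving an out-of-set $(d_1,d_2)$ with $N$ users, construct a companion rule $\mathcal{U}'$ on the same user pool that drops any anti-diversity component, i.e., replaces weak-signal selection by random selection and replaces interference-inflating selection by random selection. Monotonicity of $\R^+$ in the desired-plus-interference sum and of $\R^-$ in the interference-only sum then implies $\mathcal{U}'$ realizes a pair in \eqref{eqn:DoF_term_range} with the same target $d$ using at most $N$ users, establishing the domination and completing the proof.
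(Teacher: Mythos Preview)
Your proposal is correct and rests on the same idea the paper uses, namely the pointwise monotonicity of the per-realization rate
\[
\log_2(1+X+Y)-\log_2(1+Y)
\]
in the desired-signal power $X$ (increasing) and the interference power $Y$ (decreasing), together with the baseline \eqref{eqn:DoF_fixedN} that $(d_1,d_2)=(1,1)$ is attained without any selection. The paper's proof simply states this monotonicity and concludes in one step that multiuser dimensions should only be spent increasing $X$ or decreasing $Y$, hence $d_1\ge 1$ and $d_2\le 1$; your explicit case split and coupling construction (replacing anti-diversity components by random selection) is a more careful operationalization of that same monotonicity, but at the level of rigor the paper adopts it is not needed.
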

\begin{proof}
At each channel realization, the achievable DoF has the form of
\begin{align}
 \log_2\left(1+X+Y\right) - \log_2\left(1+Y\right),
 \label{eqn:function0}
\end{align}
where $X\triangleq \vert \mathbf{v}_{n^*}^\dagger \h{n^*,1} \vert^2$
and $Y\triangleq P \sum_{k=2}^{K} \vert \mathbf{v}_{n^*} ^\dagger
\h{n^*,k} \vert^2$ are its own signal power and the interfering
signal power at the selected user, respectively.
Since the function \eqref{eqn:function0} is an increasing function
of $X$ and a decreasing function of $Y$, for an increase of
\eqref{eqn:function0}, the multiuser dimension should be used for
increasing $X$, for decreasing $Y$, or mixture of them.
This fact results in \eqref{eqn:DoF_term_range}.
\end{proof}

Lemma 1 provides a basic guideline of using the multiuser dimension;
multiuser diversity should not be used for either decreasing DoF
gain term or increasing DoF loss term. Since the optimal target DoF
achieving strategy is obtained in the reduced set of candidate
strategies, we consider the DoF gain term larger than one and DoF
loss term smaller than one, i.e., $d_1\in[1,\infty)$ and
$d_2\in[0,1]$, in the latter parts of this paper.


\JHnewpage
\section{Interference Alignment Measure}

\subsection{Where does the DoF Loss Come from?}

In our system model, each user suffers from $K-1$ interfering
channels which is larger than or equal to the number of receive
antennas, i.e., $K-1\ge N_r$. Since the interfering channels are
isotropic and independent of each other, they span $N_r$-dimensional
space. Thus, the whole signal space at the receiver is corrupted by
interfering signals, and hence the DoF loss term becomes one if no
effort is made to align interfering signals. On the other hand, the
DoF loss can be reduced by aligning interfering signals in smaller
dimensional subspace.
%
%
For example, if the interfering signals are perfectly aligned in
$(N_r-1)$-dimensional subspace, they can be nullified by
postprocessing so that we can make the DoF loss zero.

The transmitter can exploit the multiuser dimensions to align
interfering signals by simply selecting a user whose interfering
channels are most aligned. Thus, each user needs to measure how much
the interfering channels are aligned in $(N_r-1)$-dimensional
subspace at the receiver. We call this measure as the
\emph{interference alignment measure}.
In this section, we geometrically interpret the interfering channels
and define the interference alignment measure at each user. The
interference alignment measure will be used for computing the
reducible DoF loss via multiuser diversity in Section IV.

\subsection{Preliminaries}


Let $S_0$ be the surface of the $N_r$-dimensional unit hypersphere
centered at the origin, i.e.,
\begin{align}
    S_0=\{\mathbf{x}\in\mathbb{C}^{N_r} \vert~ \Vert \mathbf{x}\Vert^2=1 \}.\NN
\end{align}
For an arbitrary unit vector $\mathbf{c}\in\mathbb{C}^{N_r}$ and an
arbitrary non-negative real number $0\le \lambda\le 1$, we can
divide $S_0$ into two parts, $S_1 (\mathbf{c}, \lambda)$ and $S_2
(\mathbf{c}, \lambda)$, given by
\begin{align}
 S_1 (\mathbf{c}, \lambda)
 &\triangleq \big\{\mathbf{x} \in\mathbb{C}^{N_r} \big\vert~
        \vert \mathbf{c}^\dagger\mathbf{x} \vert^2 \ge \lambda
        ,~~\Vert \mathbf{x}\Vert^2=1 \big\}\NNL
 S_2 (\mathbf{c}, \lambda)
 &\triangleq \big\{\mathbf{x} \in\mathbb{C}^{N_r} \big\vert~
        \vert \mathbf{c}^\dagger\mathbf{x} \vert^2 \le \lambda,
        ~~\Vert \mathbf{x}\Vert^2=1
        \big\}.\label{eqn:S2}
\end{align}
When $\mathbf{x}, \mathbf{c} \in \mathbb{R}^3$, two parts
$S_1(\mathbf{c}, \lambda)$ and $S_2(\mathbf{c}, \lambda)$ are
represented in Fig. \ref{fig:sphere_S1S2}.
Let $A\left(S_i (\mathbf{c}, \lambda)\right)$ be the surface area of
$S_i (\mathbf{c}, \lambda)$ for $i=0,1,2$.
The surface area of an $N_r$-dimensional complex unit hypersphere is
given by $A(S_0)=2\pi^{N_r}/(N_r-1)!$, and it was shown that
\cite{MSEA2003}
\begin{align}
    A\left(S_1(\mathbf{c}, \lambda)\right)
    =\frac{2\pi^{N_r}(1-\lambda)^{N_r-1}}{(N_r-1)!}, \NN
\end{align}
which is invariant with $\mathbf{c}$.
Therefore, we obtain
\begin{align}
 A\left(S_2(\mathbf{c}, \lambda)\right)
    =\frac{2\pi^{N_r}(1-(1-\lambda)^{N_r-1})}{(N_r-1)!}\NN
\end{align}
from the relationship $A(S_0) = A\left(S_1 (\mathbf{c},
\lambda)\right) + A\left(S_2 (\mathbf{c}, \lambda)\right)$.
From this fact, we obtain the following lemma.

\begin{lemma} \label{lemma:p_n_lambda}
Let $\mathbf{g}_1, \ldots, \mathbf{g}_m$ be independent and
isotropic unit vectors in $\mathbb{C}^{N_r}$.
For an arbitrary unit vector $\mathbf{c} \in \mathbb{C}^{N_r}$ and
$\lambda \in [0,1]$, the probability that $S_2( \mathbf{c},
\lambda)$ contains $\{\mathbf{g}_1, \ldots, \mathbf{g}_m\}$ becomes
\begin{align}
 \Pr[\{\mathbf{g}_1, \ldots, \mathbf{g}_m\} \subset
    S_2( \mathbf{c}, \lambda) ]
   = \left(1-(1-\lambda)^{N_r-1}\right)^m,
    \label{eqn:prob_m_lambda}
\end{align}
which is invariant with $\mathbf{c}$.
\end{lemma}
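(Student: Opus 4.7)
The plan is to reduce the joint probability to a product of single-vector probabilities and then compute each one as a surface-area ratio.

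First, I would invoke independence of $\mathbf{g}_1, \ldots, \mathbf{g}_m$ to factor
\begin{align}
\Pr[\{\mathbf{g}_1, \ldots, \mathbf{g}_m\} \subset S_2(\mathbf{c}, \lambda)]
= \prod_{i=1}^{m} \Pr[\mathbf{g}_i \in S_2(\mathbf{c}, \lambda)]. \NN
\end{align}
So the problem reduces to evaluating the single-vector probability $\Pr[\mathbf{g}_i \in S_2(\mathbf{c}, \lambda)]$ for a fixed unit vector $\mathbf{c}$.

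Next, since each $\mathbf{g}_i$ is isotropic on the unit sphere $S_0 \subset \mathbb{C}^{N_r}$, its distribution is the normalized uniform measure on $S_0$. Therefore
\begin{align}
\Pr[\mathbf{g}_i \in S_2(\mathbf{c}, \lambda)]
= \frac{A(S_2(\mathbf{c}, \lambda))}{A(S_0)}. \NN
\end{align}
Plugging in the formulas $A(S_0)=2\pi^{N_r}/(N_r-1)!$ and $A(S_2(\mathbf{c},\lambda))=2\pi^{N_r}(1-(1-\lambda)^{N_r-1})/(N_r-1)!$ already derived above yields $\Pr[\mathbf{g}_i \in S_2(\mathbf{c}, \lambda)] = 1-(1-\lambda)^{N_r-1}$, which is manifestly independent of $\mathbf{c}$ by the rotational invariance encoded in the surface-area formula. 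Raising this to the $m$th power gives the claimed expression.

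There is no real obstacle here, since the lemma is essentially a bookkeeping step that packages the surface-area computation and isotropy into a clean probabilistic statement. The only subtlety worth a one-line remark is that the event $\vert \mathbf{c}^\dagger \mathbf{g}_i\vert^2 = \lambda$ has measure zero on the sphere, so the weak inequality in the definition of $S_2(\mathbf{c}, \lambda)$ and the complementary strict inequality defining $S_1(\mathbf{c}, \lambda)$ give the same probability, making the use of the additivity $A(S_0) = A(S_1) + A(S_2)$ harmless.
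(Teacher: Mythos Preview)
Your proposal is correct and matches the paper's proof essentially step for step: compute the single-vector probability as the surface-area ratio $A(S_2(\mathbf{c},\lambda))/A(S_0)=1-(1-\lambda)^{N_r-1}$, then use independence to raise this to the $m$th power. The only cosmetic difference is the order of presentation (you factor first, the paper computes the single-vector probability first), and your measure-zero remark is an extra nicety the paper omits.
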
\vspace{.1in}
\begin{proof}
From the ratio of $A(S_2(\mathbf{c}, \lambda))$ and $A(S_0)$, we
obtain
\begin{align}
 \Pr[\mathbf{g}_i \in S_2( \mathbf{c}, \lambda)]
   = \frac{A(S_2(\mathbf{c}, \lambda))}{A(S_0)}
   =1-(1-\lambda)^{N_r-1},
   \quad \forall i.
\end{align}
Since $\mathbf{g}_1, \ldots, \mathbf{g}_m$ are independent of each
other, it is satisfied that
\begin{align}
 \Pr[\{\mathbf{g}_1, \ldots, \mathbf{g}_m\}
    \subset S_2( \mathbf{c}, \lambda) ]
 &= \Pr[\mathbf{g}_1 \in S_2( \mathbf{c}, \lambda)]^m, \NN
\end{align}
which is given in \eqref{eqn:prob_m_lambda}.
\end{proof}
\vspace{.1in}

\subsection{Interference Alignment Measure at Each User}

In this subsection, we define the interference alignment measure at
each user. The DoF loss is determined by how much the interfering
channels are closely aligned in $(N_r-1)$-dimensional subspace. Only
if interfering channels are perfectly aligned in
$(N_r-1)$-dimensional subspace, we can have zero DoF loss. The
interference alignment measure is used for computing the DoF loss at
each user.
Let $\mathbf{g}_1, \ldots, \mathbf{g}_{K-1}$ be the $K-1~(\ge N_r)$
normalized interfering channels at a user and
$\mathfrak{q}(\mathbf{g}_1, \ldots, \mathbf{g}_{K-1})$ be the
interference alignment measure among them.

Consider the following optimization problem:
\begin{align}
\underset{\mathbf{c},\lambda}{\textrm{minimize}}
    &\qquad A(S_2(\mathbf{c}, \lambda))
    \label{eqn:c_opt_lambda_opt}\\
\textrm{subject to}
    &\qquad
    S_2(\mathbf{c},\lambda)\supset \{\mathbf{g}_1, \ldots, \mathbf{g}_{K-1}\},\NNL
    &\qquad \Vert \mathbf{c} \Vert^2=1, \quad  \lambda \in [0,1].\NN
\end{align}
From the definition of $S_2(\mathbf{c},\lambda)$ given in
\eqref{eqn:S2}, this problem is equivalent to
\begin{align}
\textrm{minimize}&\qquad \lambda \label{eqn:optimization_prob}\\
\textrm{subject to}
    & \qquad \vert \mathbf{c} ^\dagger \mathbf{g}_k \vert^2 \le
    \lambda \quad  \textrm{~for~~}1\le k\le K-1, \NNL
    & \qquad \Vert\mathbf{c} \Vert^2=1,~~     \lambda \in[0, 1],\NN
\end{align}
which can be solved by linear programming \cite{LP1, LP2}.
Let $(\mathbf{c}^\star , \lambda^\star )$ be the solution of the
above problem. Then, $S_2( \mathbf{c} ^\star , \lambda^\star )$ has
the smallest surface area among all $S_2(\mathbf{c}, \lambda )$
containing $\mathbf{g}_1, \ldots, \mathbf{g}_{K-1}$.


Using $\mathbf{c}^\star $, we can divide an $N_r$-dimensional space
into two subspaces which are the one-dimensional subspace spanned by
$\mathbf{c}^\star $ and the $(N_r-1)$-dimensional complementary
subspace denoted by $\mathcal{U}$.
If there exists $\mathbf{c}^\star $ such that $\mathbf{c}^\star
\perp \{ \mathbf{g}_1, \ldots, \mathbf{g}_{K-1} \}$, it is satisfied
that $\textrm{span}( \mathbf{g}_1, \ldots, \mathbf{g}_{K-1} )\subset
\mathcal{U}$ and $S_2( \mathbf{c} ^\star , 0) \supset \{
\mathbf{g}_1, \ldots, \mathbf{g}_{K-1} \}$, and hence
$\lambda^\star$ becomes zero.
In this case, we can say that the interfering channels are perfectly
aligned in $(N_r-1)$-dimensional subspace in $\mathbb{C}^{N_r}$.
Note that $S_2 (\mathbf{c}^\star , 0)$ is an $(N_r-1)$-dimensional
subspace orthogonal to $\mathbf{c}^\star$, and $S_2
(\mathbf{c}^\star , 1)$ is the $N_r$-dimensional complex
hypersphere, $S_0$.
When $\lambda^\star$ is the smaller, the vectors are the more
aligned in the $(N_r-1)$-dimensional subspace, $\mathcal{U}$.
Thus, we will use $\lambda^\star $ as an \emph{interference
alignment measure} to quantify how much the interfering channels are
closely aligned in an $(N_r-1)$-dimensional subspace, i.e.,
\begin{align}
    \mathfrak{q}(\mathbf{g}_1, \ldots, \mathbf{g}_{K-1})
    &= \min_{\Vert \mathbf{c} \Vert=1}\max_{1\le k\le K-1}
        \vert \mathbf{c} ^\dagger \mathbf{g}_k\vert^2
    \label{eqn:AM1}\\
    &= \lambda^\star   \quad (\lambda^\star \in [0,1]).\label{eqn:lambda*}
\end{align}
In other words, we use the mini-max distance of the interfering
channels from an $(N_r-1)$-dimensional subspace. In Fig.
\ref{fig:IAM_concept}, the interference alignment measure is
geometrically represented.
The more the interfering channels are aligned, the smaller the
interference alignment measure becomes.

Since the interference alignment measure is obtained from the
optimization problem \eqref{eqn:optimization_prob}, the exact
distribution is difficult to find.
Instead, we obtain the lower bound for the cumulative distribution
function (CDF) of the interference alignment measure in the
following lemma.

\begin{lemma}\label{lemma:prob_bar_D_M} When $K > N_r$, the
probability that the interference alignment measure
$\mathfrak{q}(\mathbf{g}_1, \ldots, \mathbf{g}_{K-1})$ is smaller
than $\lambda\in [0,1]$ is lower bounded on
\begin{align}
 \PR{\mathfrak{q}(\mathbf{g}_1, \ldots, \mathbf{g}_{K-1}) \le \lambda}
 \ge \left(1-(1-\lambda)^{N_r-1}\right)^{K-N_r}.
 \label{eqn:AM1_prob}
\end{align}
\end{lemma}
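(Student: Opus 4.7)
The plan is to produce the lower bound by exhibiting one particular feasible direction $\mathbf{c}$ for the optimization problem \eqref{eqn:optimization_prob}, rather than dealing with the true minimizer $\mathbf{c}^\star$ (whose distribution is intractable). Since $\mathfrak{q}=\lambda^\star$ is defined as a minimum, any feasible $(\mathbf{c},\lambda)$ automatically witnesses $\lambda^\star\le\lambda$, so a tractable choice of $\mathbf{c}$ immediately delivers a one-sided inequality on the CDF.

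Concretely, first I would single out any $N_r-1$ of the interfering channels, say $\mathbf{g}_1,\ldots,\mathbf{g}_{N_r-1}$. Because these vectors are independent and isotropic on the unit sphere of $\mathbb{C}^{N_r}$, they almost surely span an $(N_r-1)$-dimensional subspace, whose orthogonal complement is a line; let $\tilde{\mathbf{c}}$ be a unit vector spanning that line. By construction, $\tilde{\mathbf{c}}^\dagger\mathbf{g}_k=0$ for $k=1,\ldots,N_r-1$, so these first $N_r-1$ channels lie in $S_2(\tilde{\mathbf{c}},0)\subset S_2(\tilde{\mathbf{c}},\lambda)$ for every $\lambda\in[0,1]$ at no probabilistic cost.

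Next, I would observe the set inclusion
\begin{align}
 \big\{\mathbf{g}_{N_r},\ldots,\mathbf{g}_{K-1}\}\subset S_2(\tilde{\mathbf{c}},\lambda)\big\}
 \subset \big\{\mathfrak{q}(\mathbf{g}_1,\ldots,\mathbf{g}_{K-1})\le\lambda\big\},\NN
\end{align}
since the left-hand event makes $(\tilde{\mathbf{c}},\lambda)$ a feasible pair for \eqref{eqn:optimization_prob}, forcing $\lambda^\star\le\lambda$. Taking probabilities and conditioning on $\mathbf{g}_1,\ldots,\mathbf{g}_{N_r-1}$, the vector $\tilde{\mathbf{c}}$ becomes a fixed unit vector, while $\mathbf{g}_{N_r},\ldots,\mathbf{g}_{K-1}$ remain independent isotropic unit vectors (they were independent of the conditioning variables to begin with). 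Applying Lemma \ref{lemma:p_n_lambda} with $m=K-N_r$ to this conditional distribution yields $(1-(1-\lambda)^{N_r-1})^{K-N_r}$, and because this number is invariant with $\tilde{\mathbf{c}}$, averaging over the conditioning leaves it unchanged, which is exactly \eqref{eqn:AM1_prob}.

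The only nontrivial step is the set-inclusion argument, which must make clear why using a data-dependent but non-optimal $\tilde{\mathbf{c}}$ still controls the minimizer $\mathbf{c}^\star$; everything else is a direct appeal to Lemma \ref{lemma:p_n_lambda} together with the independence between the $N_r-1$ vectors used to build $\tilde{\mathbf{c}}$ and the remaining $K-N_r$ vectors. A minor technicality worth flagging (but not belaboring) is that $\mathbf{g}_1,\ldots,\mathbf{g}_{N_r-1}$ are linearly independent almost surely, so $\tilde{\mathbf{c}}$ is well defined on an event of probability one, which is all that is needed for the probability inequality.
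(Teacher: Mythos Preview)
Your proposal is correct and mirrors the paper's own proof essentially step for step: the paper also picks $\bar{\mathbf{c}}$ orthogonal to $\mathbf{g}_1,\ldots,\mathbf{g}_{N_r-1}$, uses the event inclusion $\{S_2(\bar{\mathbf{c}},\lambda)\supset\{\mathbf{g}_1,\ldots,\mathbf{g}_{K-1}\}\}\subset\{\mathfrak{q}\le\lambda\}$, drops the first $N_r-1$ vectors by orthogonality, and then applies Lemma~\ref{lemma:p_n_lambda} with $m=K-N_r$ using independence of $\bar{\mathbf{c}}$ from the remaining vectors. Your added remarks on conditioning and almost-sure linear independence are sound refinements of points the paper leaves implicit.
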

\begin{proof} We consider two events:
\begin{align}
&\mathrm{(E1)}:~~
    \mathfrak{q} (\mathbf{g}_1, \ldots, \mathbf{g}_{K-1}) \le
    \lambda \NNL
&\mathrm{(E2)}:~~
    S_2(\bar{\mathbf{c}}, \lambda)
    \supset \{\mathbf{g}_1, \ldots, \mathbf{g}_{K-1}\},  \NN
\end{align}
where $\bar{\mathbf{c}}$ is the $N_r$-dimensional unit vector such
that $\bar{\mathbf{c}} \perp\{\mathbf{g}_1, \ldots, \mathbf{g}
_{N_r-1}\}$.
By the definition of the interference alignment measure given in
\eqref{eqn:AM1}, $\rm{(E1)}$ is true whenever $\mathrm{(E2)}$ is
true, equivalently, $\PR{\mathrm{(E1})} \ge \PR{\mathrm{(E2)}}$.
The probability of $\mathrm{(E2)}$ is obtained by
\begin{align}
 \PR{\mathrm{(E2)}}
 &=\PR{S_2(\bar{\mathbf{c}}, \lambda) \supset \{\mathbf{g}_1, \ldots,
    \mathbf{g}_{K-1}\}}\NNL
 &\stackrel{(a)}{=}
   \PR{S_2(\bar{\mathbf{c}}, \lambda) \supset \{\mathbf{g}_{N_r}, \ldots,
    \mathbf{g}_{K-1}\}}\NNL
 &\stackrel{(b)}{=}
    \left(1-(1-\lambda)^{N_r-1}\right)^{K-N_r},
\end{align}
where the equality $(a)$ is from the definition of
$\bar{\mathbf{c}}$ such that $\bar{\mathbf{c}} \perp\{\mathbf{g}_1,
\ldots, \mathbf{g} _{N_r-1}\}$.
Also, the equality $(b)$ holds from Lemma \ref{lemma:p_n_lambda} and
from the fact that $\bar{\mathbf{c}}$ is independent of
$\{\mathbf{g}_{N_r}, \ldots, \mathbf{g}_{K-1}\}$.
Thus, we obtain
\begin{align}
 \PR{\mathrm{(E1)}} \ge
 \left(1-(1-\lambda)^{N_r-1}\right)^{K-N_r}.
\end{align}
\end{proof}
\vspace{.1in}

\subsection{Achievable Value of the Interference Alignment Measure via User Selection}

The remaining question is how much we can reduce the interference
alignment measure via user selection.
In the first user group, the $n$th user has $K-1$ interfering
channels, $\mathbf{h}_{n,2}, \ldots, \mathbf{h}_{n,K}$.
The interference alignment measure at the $n$th user can be written
by
\begin{align}
 \mathfrak{q} \big( \tilde{\mathbf{h}} _{n,2},
   \ldots, \tilde{\mathbf{h}}_{n,K}\big),
   \label{eqn:AM_user}
\end{align}
where $\tilde{\mathbf{h}}_{n,k}$ is the normalized interfering
channel, i.e., $\tilde{\mathbf{h}}_{n,k} = \h{n,k} /
\Vert\h{n,k}\Vert$.
Thus, the achievable smallest interference alignment measure via
user selection is given by
\begin{align}
 \min_n~\mathfrak{q} \big( \tilde{\mathbf{h}} _{n,2},
   \ldots, \tilde{\mathbf{h}}_{n,K}\big).
   \label{eqn:AM_smallest}
\end{align}
Obviously, the smallest interference alignment measure will decrease
as the number of users increases.
In the following lemma, we find the relationship between
\eqref{eqn:AM_smallest} and the number of total users (i.e., $N$).

\begin{lemma}\label{lemma:lambda_n1}
When there are $N$ users, the expectation of the smallest
interference alignment measure is upper bounded on
\begin{align}
 \mathbb{E}\left[\min_{n}~\mathfrak{q} \big( \tilde{\mathbf{h}} _{n,2},
   \ldots, \tilde{\mathbf{h}}_{n,K}\big)\right]
   < N^{-\frac{1}{K-N_r}}.\label{eqn:AM1_bound}
\end{align}

\end{lemma}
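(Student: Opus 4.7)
The plan is to reduce the expected minimum to a single tail integral and then evaluate that integral via a change of variables. For brevity let $Q_n \triangleq \mathfrak{q}(\tilde{\mathbf{h}}_{n,2},\ldots,\tilde{\mathbf{h}}_{n,K})\in[0,1]$; the users' channels are independent across $n$, so $\{Q_n\}$ are i.i.d. Using the tail-integral formula for a non-negative random variable bounded by $1$,
\begin{align}
 \mathbb{E}\Big[\min_n Q_n\Big]
 = \int_0^1 \PR{\min_n Q_n > \lambda}\,d\lambda
 = \int_0^1 \PR{Q_1 > \lambda}^N\,d\lambda. \NN
\end{align}

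Next I would pull in Lemma~\ref{lemma:prob_bar_D_M}, which gives $\PR{Q_1\le\lambda}\ge(1-(1-\lambda)^{N_r-1})^{K-N_r}$. That expression is unwieldy, so I would first clean it up with the elementary inequality $(1-\lambda)^{N_r-1}\le 1-\lambda$ on $[0,1]$ for $N_r\ge 2$, which yields the much simpler $\PR{Q_1\le\lambda}\ge\lambda^{K-N_r}$, and therefore
\begin{align}
 \PR{Q_1 > \lambda}^N \le \bigl(1-\lambda^{K-N_r}\bigr)^N \le e^{-N\lambda^{K-N_r}}, \NN
\end{align}
where the final inequality is the standard $(1-x)^N\le e^{-Nx}$.

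Finally, I would substitute $u = N\lambda^{K-N_r}$, producing the Jacobian $d\lambda=\frac{1}{K-N_r}\,N^{-1/(K-N_r)}\,u^{1/(K-N_r)-1}\,du$ and converting the bound into a truncated Gamma integral:
\begin{align}
 \int_0^1 e^{-N\lambda^{K-N_r}}\,d\lambda
 \le N^{-\frac{1}{K-N_r}}\,\Gamma\!\Big(1+\tfrac{1}{K-N_r}\Big). \NN
\end{align}
Because $K>N_r$ forces $1+\tfrac{1}{K-N_r}\in(1,2]$, and $\Gamma(x)\le 1$ on $[1,2]$ (since $\Gamma(1)=\Gamma(2)=1$ together with convexity of $\Gamma$), the Gamma factor is at most $1$, which delivers the stated bound. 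Strictness comes either from $\Gamma(1+\tfrac{1}{K-N_r})<1$ when $K-N_r\ge 2$, or, when $K-N_r=1$, from the strict inequalities $(1-x)^N<e^{-Nx}$ on $(0,1)$ and $\int_0^N<\int_0^\infty$. The only mildly involved step is the change-of-variable bookkeeping for the Jacobian; conceptually the proof is just Lemma~\ref{lemma:prob_bar_D_M} followed by a standard Gamma-integral evaluation.
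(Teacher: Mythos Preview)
Your proof is correct and tracks the paper's argument through the first several steps: the tail-integral representation of $\mathbb{E}[\min_n Q_n]$, the application of Lemma~\ref{lemma:prob_bar_D_M}, and the simplification $(1-\lambda)^{N_r-1}\le 1-\lambda$ are exactly what the paper does, leading to the common intermediate bound $\int_0^1(1-\lambda^{K-N_r})^N\,d\lambda$.

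Where you diverge is in evaluating this integral. The paper computes it \emph{exactly} via the beta-function identity $\int_0^1(1-x^q)^{r-1}\,dx=\tfrac{1}{q}\beta(\tfrac{1}{q},r)$, obtaining $\Gamma(1+\tfrac{1}{K-N_r})\cdot\Gamma(N+1)/\Gamma(N+1+\tfrac{1}{K-N_r})$, and then bounds the ratio of Gamma functions using Gautschi's inequality. You instead insert the extra relaxation $(1-x)^N\le e^{-Nx}$ and use the substitution $u=N\lambda^{K-N_r}$ to land on a truncated Gamma integral; the factor $N^{-1/(K-N_r)}$ then falls out of the Jacobian directly rather than from Gautschi. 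Both routes terminate with the same observation $\Gamma(1+\tfrac{1}{K-N_r})\le 1$ on $(1,2]$. Your version trades one additional inequality for the removal of the Gautschi step, making it slightly more self-contained; the paper's version is marginally tighter at the penultimate stage but needs the external Gamma-ratio bound. Your handling of the strictness edge case $K-N_r=1$ is also a bit more explicit than the paper's.
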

\vspace{.1in}
\begin{proof}
The complementary CDF of \eqref{eqn:AM_smallest} is bounded on
\begin{align}
 \PR{\min_n\mathfrak{q} \big( \tilde{\mathbf{h}} _{n,2},
   \ldots, \tilde{\mathbf{h}}_{n,K}\big) \ge \lambda}
 &\qquad=\PR{\mathfrak{q} \big( \tilde{\mathbf{h}} _{n,2},
   \ldots, \tilde{\mathbf{h}}_{n,K}\big) \ge \lambda \textrm{~for all~} n}\NNL
 &\qquad=\prod_{n=1}^N\PR{
    \mathfrak{q} \big( \tilde{\mathbf{h}} _{n,2},
   \ldots, \tilde{\mathbf{h}}_{n,K}\big)
   \ge \lambda}\NNL
 &\qquad=\left(1-\PR{
    \mathfrak{q} \big( \tilde{\mathbf{h}} _{n,2},
   \ldots, \tilde{\mathbf{h}}_{n,K}\big)
   \le \lambda}\right)^N\NNL
 &\qquad\stackrel{(a)}{<} \big[1- (1-(1 - \lambda)^{N_r-1})^{K-N_r}\big]^N,
    \label{eqn:prob_bound}
\end{align}
where $\lambda\in[0,1]$, and the inequality $(a)$ holds from Lemma
\ref{lemma:prob_bar_D_M}.
Using this bound, we obtain \eqref{eqn:AM1_bound} as
\begin{align}
 \mathbb{E} \left[\min_{n}~\mathfrak{q} \big(
    \tilde{\mathbf{h}} _{n,2}, \ldots, \tilde{\mathbf{h}} _{n,K} \big)
    \right]
    &=\int_0^1 \PR{\min_n \mathfrak{q} \big( \tilde{\mathbf{h}} _{n,2},
   \ldots, \tilde{\mathbf{h}}_{n,K}\big)\ge \lambda} d \lambda \NNL
    &\le \int_0^1 \big[1- (1 - (1- \lambda) ^{N_r-1})^{K-N_r}\big]^N d \lambda\NNL
    &\stackrel{(a)}{\le} \int_0^1 \big[1- (1 - (1- \lambda))^{K-N_r}\big]^N d \lambda\NNL
    &\stackrel{(b)}{=} \frac{1}{K-N_r}\beta\left(\frac{1}{K-N_r}, N+1\right)\NNL
    &\stackrel{(c)}{=} \frac{\Gamma\left(1+\frac{1}{K-N_r}\right)\Gamma(N+1)}
        {\Gamma\left(N+ 1 + \frac{1}{K-N_r}\right)}
        \NNL
    &\stackrel{(d)}{<} N^{-\frac{1}{K-N_r}},\NN
\end{align}
where the inequality $(a)$ is due to $(1-\lambda)^{N_r-1} \le
(1-\lambda)$ for $0\le \lambda\le 1$, and the equality $(b)$ holds
from the representation of beta function \cite[p.324]{GR2007}
\begin{align}
 \int_0^1 x^{p-1}(1-x^q)^{r-1}dx
   =\frac{1}{q}\beta\left(\frac{p}{q},r\right).\NN
\end{align}
The equality $(c)$ comes from the definition of the beta function
$\beta(p,q) = \Gamma(p)\Gamma(q)/\Gamma(p+q)$ and the property of
the Gamma function $\Gamma(p+1)=p\Gamma(p)$.
In the right-hand-side of the equality $(c)$, it holds
$\Gamma(1+\frac{1}{K-N_r})<1$ because $0 < \Gamma(x) < 1$ for $1< x
< 2$. Also, it is satisfied that
\begin{align}
\frac{\Gamma(N+1)}
        {\Gamma\left(N+ 1 + \frac{1}{K-N_r}\right)}
    &\stackrel{(e)}{<}
        \left(N+1+\frac{1}{K-N_r}\right)^{-\frac{1}{K-N_r}}
        \NNL
    &< N^{-\frac{1}{K-N_r}},\NN
\end{align}
where $(e)$ is from the Gautschi's inequality \cite{G1959} given by
\begin{align}
 \frac{\Gamma(x+s)}{\Gamma(x+1)}<(x+1)^{s-1}, ~~~ \textrm{for~~}x>0,~~0<s<1,\NN
\end{align}
with $x=N+\frac{1}{K-N_r}$ and $s=1-\frac{1}{K-N_r}$. Thus, the
inequality $(d)$ holds.
\end{proof}

\JHnewpage
\section{Optimal Exploitation of Multiuser Diversity for the Target DoF}

In this section, we derive the optimal strategies of exploiting
multiuser diversity for the target DoF $d$. We first decompose the
target DoF $d$ into the DoF gain term $d_1$ and the DoF loss term
$d_2$ such that $d=d_1-d_2$, and find the required user scalings for
$d_1$ and $d_2$, respectively. Then, the optimal target DoF
achieving strategy is derived by determining the optimal combination
$(d_1^\star,d_2^\star)$ which requires the minimum user scaling for
the target DoF $d$.

\subsection{Required User Scaling to Reduce the DoF Loss Term}

In this subsection, we find the required user scaling to reduce the
DoF loss. Via user selection, the rate loss term given in
\eqref{eqn:Rloss} can be minimized by
\begin{align}
 \mathbb{E}
 \left[\min_{n,\mathbf{v}_n}~\log_2\left(1+  P \sum_{k=2}^{K}
    \vert \mathbf{v}_{n}^\dagger \h{n,k} \vert^2\right)
    \right].  \label{eqn:Rloss_min}
\end{align}
This value is upper bounded on
\begin{align}
 \mathbb{E}
 \left[\min_{n,\mathbf{v}_n}~\log_2\left(1+  P \sum_{k=2}^{K}
    \vert \mathbf{v}_{n}^\dagger \h{n,k} \vert^2\right)
    \right]
 &\stackrel{(a)}{=}
 \mathbb{E}_{\Vert\mathbf{h}\Vert, \tilde{\mathbf{h}}}
 \left[\min_{n,\mathbf{v}_n}
 ~\log_2\left(1+  P \sum_{k=2}^{K}
    \Vert\mathbf{h}_{n,k}\Vert^2\vert \mathbf{v}_{n}^\dagger \tilde{\mathbf{h}}_{n,k} \vert^2\right)
    \right]\NNL
 &\stackrel{(b)}{\le}
 \mathbb{E}_{\tilde{\mathbf{h}}}
 \left[\min_{n,\mathbf{v}_n}
 ~\mathbb{E}_{\Vert\mathbf{h}\Vert}\log_2\left(1+  P \sum_{k=2}^{K}
    \Vert\mathbf{h}_{n,k}\Vert^2\vert \mathbf{v}_{n}^\dagger \tilde{\mathbf{h}}_{n,k} \vert^2\right)
    \right]\NNL
 &\stackrel{(c)}{\le}
 \mathbb{E}_{\tilde{\mathbf{h}}}
 \left[\min_{n,\mathbf{v}_n}
 ~\log_2\left(1+  N_rP \sum_{k=2}^{K}
    \vert \mathbf{v}_{n}^\dagger \tilde{\mathbf{h}}_{n,k} \vert^2\right)
    \right]\NNL
 &\stackrel{(d)}{\le}
 \mathbb{E}_{\tilde{\mathbf{h}}}
 \left[\min_{n}
 ~\log_2\left(1+  N_rP
    (K-1)\mathfrak{q}(\tilde{\mathbf{h}}_{n,2}, \ldots, \tilde{\mathbf{h}}_{n,K})
  \right)\right]\NNL
 &\stackrel{(e)}{\le}
 ~\log_2\left(1+  N_rP (K-1)
 \mathbb{E}_{\tilde{\mathbf{h}}}\left[
    \min_{n} \mathfrak{q}(\tilde{\mathbf{h}}_{n,2}, \ldots, \tilde{\mathbf{h}}_{n,K})
  \right]\right)\NNL
 &\stackrel{(f)}{\le}
 ~\log_2\left(1+  N_rP (K-1) N^{-\frac{1}{K-N_R}}\right),
 \label{eqn:Rloss_bound}
\end{align}
where the equality $(a)$ is obtained by decomposing the channel
vector into direction and magnitude independent of each other such
that $\h{n,k}=\Vert\h{n,k}\Vert \tilde{\mathbf{h}}_{n,k}$. The
inequality $(b)$ holds because the minimum of the average is larger
than the average of the minimum. The inequality $(c)$ is from the
Jensen's inequality and $\mathbb{E}\Vert\h{n,k}\Vert^2=N_r$. Also,
the inequality $(d)$ holds from the fact that
\begin{align}
 \min_{\mathbf{v}_n}
    \left[\sum_{k=2}^{K} \vert \mathbf{v}_{n}^\dagger
    \tilde{\mathbf{h}}_{n,k} \vert^2\right]
 &\le \min_{\mathbf{v}_n} \left[(K-1) \max_{2\le k\le K}
    \vert \mathbf{v}_n^\dagger \tilde{\mathbf{h}}_{n,k}\vert^2
    \right]\NNL
 &= (K-1) \mathfrak{q}
    (\tilde{\mathbf{h}}_{n,2}, \ldots, \tilde{\mathbf{h}}_{n,K}),
\end{align}
where $\mathfrak{q}(\tilde{\mathbf{h}}_{n,2}, \ldots,
\tilde{\mathbf{h}}_{n,K})$ is the interference alignment measure at
the user $n$ given in \eqref{eqn:AM1}.
The inequality $(e)$ is from the Jensen's inequality, and the
inequality $(f)$ holds from Lemma \ref{lemma:lambda_n1}.
We obtain the following theorem.

\begin{theorem}\label{theorem:N_scaling_DoF_loss}
We can obtain the DoF loss term $d_2 \in [0, 1]$ when the number of
users in each group scales as
\begin{align}
 N\propto P^{(1-d_2)(K-N_r)}.\NN
\end{align}
\end{theorem}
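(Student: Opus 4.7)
The plan is to directly invoke the chain of inequalities culminating in \eqref{eqn:Rloss_bound} with a specific user selection rule, and then optimize $N$ as a function of $P$. First, I would fix the achievability scheme: at each user $n$, compute the interference alignment measure $\mathfrak{q}(\tilde{\mathbf{h}}_{n,2},\ldots,\tilde{\mathbf{h}}_{n,K})$ from \eqref{eqn:AM1} and feed it back; the transmitter then selects $n^* = \arg\min_n \mathfrak{q}(\tilde{\mathbf{h}}_{n,2},\ldots,\tilde{\mathbf{h}}_{n,K})$, and the selected user applies the postprocessing vector $\mathbf{v}_{n^*} = \mathbf{c}^\star$ solving \eqref{eqn:optimization_prob}. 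This is precisely the scheme for which the bound \eqref{eqn:Rloss_bound} was derived, so under this scheme
\begin{align}
\R^- \;\le\; \log_2\!\left(1 + N_r\,(K-1)\,P\cdot N^{-\frac{1}{K-N_r}}\right). \NN
\end{align}

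Next, I would substitute the prescribed user scaling $N = c\,P^{(1-d_2)(K-N_r)}$ for an arbitrary constant $c>0$. Then
\begin{align}
N^{-\frac{1}{K-N_r}} \;=\; c^{-\frac{1}{K-N_r}}\,P^{-(1-d_2)} \;=\; c^{-\frac{1}{K-N_r}}\,P^{d_2-1},\NN
\end{align}
so the factor $P\cdot N^{-1/(K-N_r)}$ inside the logarithm collapses to a constant multiple of $P^{d_2}$. Dividing \eqref{eqn:Rloss_bound} by $\log_2 P$ and letting $P\to\infty$ then yields $\tDoF{\R^-}\le d_2$, which is the claimed DoF loss.

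The main conceptual obstacle is convincing oneself that, despite the long chain of concessions in \eqref{eqn:Rloss_bound} (decomposing direction and magnitude, two applications of Jensen's inequality, the combinatorial bound from Lemma \ref{lemma:prob_bar_D_M}, and Gautschi's inequality in Lemma \ref{lemma:lambda_n1}), the resulting pre-log exponent $d_2$ is still tight in the DoF sense. The reason this works is that every concession only affects constants and polynomially bounded factors, not the exponent of $P$ inside the logarithm; the exponent is controlled exclusively by the rate at which $N^{-1/(K-N_r)}$ shrinks, which in turn is pinned down by Lemma \ref{lemma:lambda_n1}.

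Finally, I would observe that the statement of the theorem is a one-sided (achievability) claim: we only need to exhibit a strategy whose DoF loss is at most $d_2$, and Lemma \ref{lemma:DoF_terms_range} already restricts attention to $d_2\in[0,1]$, so no matching converse is required here. Thus the argument reduces to a direct substitution into the pre-derived bound \eqref{eqn:Rloss_bound}, and the theorem follows.
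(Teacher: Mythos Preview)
Your proposal is correct and follows essentially the same approach as the paper: invoke the pre-derived upper bound \eqref{eqn:Rloss_bound}, substitute $N\propto P^{(1-d_2)(K-N_r)}$, and read off that the DoF loss term is at most $d_2$. The paper's proof is in fact terser than yours (it does not re-specify a concrete selection scheme or discuss why the concessions only affect constants), but the logical content is the same.
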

\begin{proof}
To obtain the DoF loss term $d_2$,
it is enough to make \eqref{eqn:Rloss_bound} satisfying
\begin{align}
 \DoF{\log_2\left(1+  N_rP (K-1) N^{-\frac{1}{K-N_R}}\right)} = d_2,
\end{align}
which is achieved if $N \propto P^{(1-d_2)(K-N_r)}$.
\end{proof}
A tighter upper bound of the rate loss term than
\eqref{eqn:Rloss_bound} could exist, but the derived upper bound in
\eqref{eqn:Rloss_bound} enables us to compare the increasing speeds
of the transmit power and the required number of users, which is the
crucial factor of DoF calculation. The scaling law of the required
number of users obtained from \eqref{eqn:Rloss_bound}, which is
derived in Theorem \ref{theorem:N_scaling_DoF_loss}, is enough to
find the optimal target DoF achieving strategy as shown in Section
\ref{sec:DoF_achieving_strategy}.


\subsection{Required User Scaling to Increase the DoF Gain Term}
\label{sec:DoF_gain_scaling}

We also find the required user scaling to increase the DoF gain
term.
From the definition of the rate gain term given in
\eqref{eqn:Rgain}, the maximum rate gain term obtained by user
selection is
\begin{align}
 \mathbb{E}\left[
    \max_{n, \mathbf{v}_n} \log_2\left(1+  P \sum_{k=1}^{K}
    \vert \mathbf{v}_{n}^\dagger \h{n,k} \vert^2\right)\right].
    \label{eqn:Rgain_max}
\end{align}
This value is lower bounded on
\begin{align}
\mathbb{E}\left[
    \max_{n, \mathbf{v}_n} \log_2\left(1+  P \sum_{k=1}^{K}
    \vert \mathbf{v}_{n}^\dagger \h{n,k} \vert^2\right)\right]
 &\ge\mathbb{E}\left[
    \max_{n, \mathbf{v}_n} \log_2\left(1+  P
    \vert \mathbf{v}_{n}^\dagger \h{n,1} \vert^2\right)\right]\NNL
 &= \mathbb{E}\left[
    \max_{n} \log_2\left(1+  P \Vert \h{n,1}
    \Vert^2\right)\right],\label{eqn:Rgain_lower_bound}
\end{align}
and upper bounded on
\begin{align}
\mathbb{E}\left[
    \max_{n, \mathbf{v}_n} \log_2\left(1+  P \sum_{k=1}^{K}
    \vert \mathbf{v}_{n}^\dagger \h{n,k} \vert^2\right)\right]
 &\le \mathbb{E}\left[
    \max_{n} \log_2\left(1+  P \sum_{k=1}^{K}
    \Vert \h{n,k} \Vert^2\right)\right]\NNL
 &\le  \mathbb{E}\left[
    \max_{n} \log_2\left(1+  PK \max_k
    \Vert \h{n,k} \Vert^2\right)\right].\label{eqn:Rgain_upper_bound}
\end{align}
Since all $\Vert\h{n,k}\Vert^2$ are i.i.d. $\chi^2(2N_r)$ random
variables, for sufficiently large $N$, the bounds
\eqref{eqn:Rgain_lower_bound} and \eqref{eqn:Rgain_upper_bound} acts
like \cite{SH2007}
\begin{align}
 \mathbb{E}\left[
    \max_{n} \log_2\left(1+  P \Vert \h{n,1}
    \Vert^2\right)\right]
    &\sim \log_2(1+P\log N) \NNL
 \mathbb{E}\left[
    \max_{n,k} \log_2\left(1+  PK
    \Vert \h{n,k} \Vert^2\right)\right]
    &\sim \log_2(1+PK\log (KN)).\NN
\end{align}
Thus, when both $N$ and $P$ are large enough, \eqref{eqn:Rgain_max}
act like $\log_2(P\log N)$, i.e,
\begin{align}
 \lim_{P\to\infty \atop N\to\infty}
 \mathbb{E}\left[
    \max_{n, \mathbf{v}_n} \log_2\left(1+  P \sum_{k=1}^{K}
    \vert \mathbf{v}_{n}^\dagger \h{n,k} \vert^2\right)\right]
 \sim \log_2(P\log N). \label{eqn:Rgain_asymptotic}
\end{align}
Therefore, we establish the following theorem.

\begin{theorem}\label{theorem:N_scaling_DoF_gain}
The DoF gain term $d_1 \in[1,\infty)$ is achievable when the number
of users in each group scales as
\begin{align}
 N\propto e^{P^{(d_1-1)}}.\NN
\end{align}
\end{theorem}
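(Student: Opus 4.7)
The plan is to use the asymptotic characterization $\R^+ \sim \log_2(P\log N)$ already established in equation (\ref{eqn:Rgain_asymptotic}) just above the theorem statement, together with a direct substitution of the proposed user scaling $N \propto e^{P^{(d_1-1)}}$. Concretely, the rate gain term has been sandwiched between \eqref{eqn:Rgain_lower_bound} and \eqref{eqn:Rgain_upper_bound}, i.e., between $\mathbb{E}[\max_n \log_2(1+P\Vert\h{n,1}\Vert^2)]$ and $\mathbb{E}[\max_{n,k}\log_2(1+PK\Vert\h{n,k}\Vert^2)]$. Since each $\Vert\h{n,k}\Vert^2$ is a $\chi^2(2N_r)$ random variable, classical extreme-value asymptotics give $\max_n \Vert\h{n,1}\Vert^2 \sim \log N$ and $\max_{n,k}\Vert\h{n,k}\Vert^2 \sim \log(KN)$ for large $N$, so both sandwich bounds behave like $\log_2(P\log N)$ up to multiplicative constants that vanish after dividing by $\log_2 P$.

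The computation is then immediate. Plugging $\log N \propto P^{(d_1-1)}$ into the asymptotic expression gives
\begin{align}
\log_2(P \log N) \sim \log_2\!\left(P \cdot P^{(d_1-1)}\right) = d_1 \log_2 P,\NN
\end{align}
and dividing through by $\log_2 P$ and letting $P\to\infty$ yields $\tDoF{\R^+} = d_1$, which is exactly the claim of the theorem.

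The main obstacle is conceptual rather than algebraic: the asymptotic in \eqref{eqn:Rgain_asymptotic} was obtained in the joint regime $P,N\to\infty$, whereas the DoF in the theorem is a single limit in $P$ with $N$ coupled to $P$ by a prescribed rule. For $d_1 > 1$ the coupling $N \propto e^{P^{(d_1-1)}}$ forces $N\to\infty$ super-polynomially fast in $P$, so the joint limit underlying \eqref{eqn:Rgain_asymptotic} is compatible with the single limit in $P$ and the substitution above is legitimate. The boundary case $d_1=1$ corresponds to constant $N$, for which the claim reduces to the standard single-user scaling $\mathbb{E}\log_2(1+P\Vert\h{n,1}\Vert^2)\sim \log_2 P$ that gives DoF one directly. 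With this care taken about the order of limits, the proof collapses to the one-line DoF computation displayed above.
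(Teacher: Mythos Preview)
Your proposal is correct and follows essentially the same route as the paper: the paper's proof simply invokes \eqref{eqn:Rgain_asymptotic} and sets $\tDoF{\log_2(P\log N)}=d_1$ to read off the scaling $N\propto e^{P^{(d_1-1)}}$, which is exactly your substitution argument run in the other direction. Your added remarks on the order of limits and the boundary case $d_1=1$ are more careful than what the paper writes out explicitly, but the core argument is identical.
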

\begin{proof}
We use \eqref{eqn:Rgain_asymptotic}.
By setting
\begin{align}
 \DoF{\log_2(P\log N)} = d_1,
\end{align}
we obtain the required user scaling for the DoF gain term $d_1$
given by $N\propto e^{P^{(d_1-1)}}$.
\end{proof}

\subsection{Target DoF Achieving Strategy}\label{sec:DoF_achieving_strategy}

In Theorem \ref{theorem:N_scaling_DoF_loss} and Theorem
\ref{theorem:N_scaling_DoF_gain}, we found the required user
scalings for the DoF loss term $d_2$ and the DoF gain term $d_1$,
respectively. In this subsection, we find the optimal target DoF
achieving strategy which requires the minimum user scaling. We start
with the following theorem.
%
\begin{theorem}\label{theorem:scaling1}
For the target DoF up to one, the whole multiuser dimensions should
be devoted to minimizing the DoF loss caused by interfering signals.
The optimal DoF achieving strategy for the target DoF $d \in [0, 1]$
is $(d_1^\star, d_2^\star) = (1, 1-d)$, and the corresponding
sufficient user scaling is
\begin{align}\label{eqn:scaling(d<1)}
    N\propto P^{d(K-N_r)}.
\end{align}
\end{theorem}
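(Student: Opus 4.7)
The plan is to combine Theorem \ref{theorem:N_scaling_DoF_loss} and Theorem \ref{theorem:N_scaling_DoF_gain} with the reduction in Lemma \ref{lemma:DoF_terms_range}, and then compare the user-scaling costs of all admissible decompositions $(d_1,d_2)$ with $d_1-d_2=d$, $d_1\ge 1$, $d_2\in[0,1]$. Since $d\in[0,1]$, every admissible pair satisfies $d_1\in[1,1+d]$ (so $d_1-1\ge 0$) and $d_2=d_1-d\in[1-d,1]$ (so $d_2\ge 0$). Hence the two theorems apply and give the two separate scaling requirements; the overall sufficient scaling for the pair $(d_1,d_2)$ is the larger of the two, namely $N\propto\max\bigl\{e^{P^{d_1-1}},\,P^{(1-d_2)(K-N_r)}\bigr\}$.

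First I would observe that the gain-term cost is the binding constraint whenever $d_1>1$: in that case $e^{P^{d_1-1}}$ is super-polynomial in $P$ and hence asymptotically dominates any polynomial $P^{(1-d_2)(K-N_r)}$. Consequently, the only way to keep the required $N$ polynomial in $P$ is to choose $d_1=1$, in which case the gain-term requirement reduces to a constant ($e^{P^0}=e$) and the loss-term requirement becomes the binding one. Setting $d_1^\star=1$ forces $d_2^\star=1-d$, and Theorem \ref{theorem:N_scaling_DoF_loss} gives the corresponding sufficient scaling
\begin{align}
N\propto P^{(1-d_2^\star)(K-N_r)}=P^{d(K-N_r)}.\NN
\end{align}

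To rigorously conclude optimality, I would argue by contradiction: suppose some pair $(d_1',d_2')\ne(1,1-d)$ with $d_1'-d_2'=d$ required asymptotically fewer users than $P^{d(K-N_r)}$. Lemma \ref{lemma:DoF_terms_range} forces $d_1'\ge 1$; if $d_1'>1$ then Theorem \ref{theorem:N_scaling_DoF_gain} demands $N\propto e^{P^{d_1'-1}}$, which grows faster than any polynomial in $P$ and thus cannot be smaller than $P^{d(K-N_r)}$ asymptotically—contradiction. If $d_1'=1$ then $d_2'=1-d=d_2^\star$, so the pair coincides with the claimed optimum. Therefore $(1,1-d)$ minimizes the user scaling, completing the proof.

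The main obstacle here is not technical but conceptual: one must clearly distinguish between polynomial and super-polynomial $N$-versus-$P$ scalings, and justify that when comparing achievability orders the super-polynomial one always loses. The algebra is otherwise immediate from the two preceding theorems, and the Lemma \ref{lemma:DoF_terms_range} restriction guarantees no additional candidate pairs (e.g., $d_1<1$ or $d_2>1$) need be considered.
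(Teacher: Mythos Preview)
Your overall strategy matches the paper's: show that $(1,1-d)$ is achievable with polynomial scaling $N\propto P^{d(K-N_r)}$, and that any alternative $(1+\Delta,1-d+\Delta)$ with $\Delta>0$ would force super-polynomial scaling. However, there is a genuine gap in how you justify the second step.

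You write that ``Theorem~\ref{theorem:N_scaling_DoF_gain} demands $N\propto e^{P^{d_1'-1}}$'' when $d_1'>1$. But Theorem~\ref{theorem:N_scaling_DoF_gain} is a \emph{sufficiency} statement: it says that this scaling \emph{achieves} $d_1'$, not that it is \emph{required}. Your contradiction argument needs necessity, which the theorem as stated does not provide. The paper closes this gap differently: rather than citing Theorem~\ref{theorem:N_scaling_DoF_gain} as a lower bound, it substitutes the candidate polynomial scaling $N\propto P^{d(K-N_r)}$ directly into the two-sided asymptotic \eqref{eqn:Rgain_asymptotic} (which was derived from matching upper and lower bounds on $\R^+$) and computes
\[
\DoF{\log_2\bigl(P\log(P^{d(K-N_r)})\bigr)}=1.
\]
This shows that even if \emph{all} of the multiuser dimension under this scaling were devoted to the gain term, one still cannot push $d_1$ above~$1$; hence any strategy with $d_1>1$ needs strictly more users. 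You should invoke \eqref{eqn:Rgain_asymptotic} (or equivalently the upper bound \eqref{eqn:Rgain_upper_bound}) rather than Theorem~\ref{theorem:N_scaling_DoF_gain} at this point.

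A smaller issue: your claim that the sufficient scaling for a general pair $(d_1,d_2)$ is the \emph{maximum} of the two individual scalings is not justified (and compare Lemma~\ref{lemma:general_scaling_d>1}, where the two-stage construction yields a \emph{product}). Fortunately this claim is not actually needed: for the pair $(1,1-d)$ the gain-term requirement is vacuous, so Theorem~\ref{theorem:N_scaling_DoF_loss} alone gives the sufficiency, and for the optimality direction you only need the necessity argument above.
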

\begin{proof}
In Theorem \ref{theorem:N_scaling_DoF_loss}, we have shown that the
target DoF $d \in [0,1]$ is achievable by reducing the DoF loss term
with the user scaling $N\propto P^{d(K-N_r)}$.
On the other hand, this user scaling cannot increase the DoF gain
term.
Substituting $N\propto P^{d(K-N_r)}$ into
\eqref{eqn:Rgain_asymptotic} the DoF gain term $d_1$ becomes
\begin{align}
 \DoF{\log_2\left(P \log( P^{d(K-N_r)})\right)} = 1.
 \label{eqn:DoF_gain_term1}
\end{align}
which is the same as when there is a fixed number of users as
described in \eqref{eqn:DoF_fixedN}.
That is, any other combinations $(d_1,d_2) = (1+\Delta, 1-d +
\Delta)$ which achieve the target DoF $d$ requires larger user
scaling than $N\propto P^{d(K-N_r)}$, where $\Delta >0$ since
$d_1>1$.
Therefore, the optimal target DoF achieving strategy is given by
$(d_1^\star, d_2^\star) = (1, 1-d)$, and the sufficient user scaling
is $N\propto P^{d(K-N_r)}$.
\end{proof}
\JHnewpage

Now, we derive the target DoF achieving strategy when the target DoF
$d$ is greater than one. To find the optimal DoF achieving strategy,
we firstly find the sufficient user scaling for an arbitrary
strategy $(d_1, d_2)$ achieving  DoF $d ~(=d_1-d_2>1)$. Then, we
show that the optimal target DoF achieving strategy for the target
DoF $d ~(>1)$ is $(d_1^\star, d_2^\star)=(d,0)$.


\begin{lemma}\label{lemma:general_scaling_d>1}
For the target DoF $d~(>1)$, the sufficient user scaling for an
arbitrary strategy $(d_1, d_2)$ achieving DoF $d~(=d_1-d_2)$ is
given by
\begin{align}
    N \propto e^{P^{(d_1-1)}} P^{(1-d_2)(K-N_r)},
    \label{eqn:general_scaling(d>1)}
\end{align}
where $d_1>1$ and $d_2\in[0,1]$.
\end{lemma}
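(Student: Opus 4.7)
The plan is to combine Theorem \ref{theorem:N_scaling_DoF_loss} and Theorem \ref{theorem:N_scaling_DoF_gain} via a two-stage user selection procedure, leveraging the statistical independence between the desired channel $\mathbf{h}_{n,1}$ and the interfering channels $\{\mathbf{h}_{n,k}\}_{k=2}^{K}$. In the first stage, from the $N$ users I would retain the subset $\mathcal{S}$ of those whose interference alignment measure satisfies $\mathfrak{q}(\tilde{\mathbf{h}}_{n,2},\ldots,\tilde{\mathbf{h}}_{n,K})\le P^{-(1-d_2)}$. In the second stage, among users in $\mathcal{S}$ I would pick $n^{\star}=\arg\max_{n\in\mathcal{S}}|\mathbf{c}_{n}^{\star\dagger}\mathbf{h}_{n,1}|^{2}$ and set the postprocessing vector to $\mathbf{v}_{n^{\star}}=\mathbf{c}_{n^{\star}}^{\star}$, the minimax direction solving \eqref{eqn:optimization_prob}. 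Intuitively, the first stage uses the multiuser dimensions to reduce the DoF loss to $d_2$ as in Theorem \ref{theorem:N_scaling_DoF_loss}, while the second stage uses the remaining dimensions to boost the desired signal up to the DoF gain $d_1$ as in Theorem \ref{theorem:N_scaling_DoF_gain}.

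Applying Lemma \ref{lemma:prob_bar_D_M} with $\lambda=P^{-(1-d_2)}$ gives a per-user acceptance probability of at least $(1-(1-\lambda)^{N_r-1})^{K-N_r}$, which behaves like $\Theta(P^{-(1-d_2)(K-N_r)})$ for large $P$. Hence $\mathbb{E}[|\mathcal{S}|]\propto N\cdot P^{-(1-d_2)(K-N_r)}$, and a standard binomial concentration argument shows $|\mathcal{S}|$ is of the same order with high probability. Reusing the chain of inequalities \eqref{eqn:Rloss_bound}, the interference power after postprocessing at $n^{\star}$ is upper bounded by $(K-1)P\lambda_{n^{\star}}^{\star}\max_{k\ge 2}\|\mathbf{h}_{n^{\star},k}\|^{2}$; since $\lambda_{n^{\star}}^{\star}\le P^{-(1-d_2)}$ and the channel norms grow at most poly-logarithmically in $N$, this yields a DoF loss term equal to $d_2$.

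For the DoF gain term, the key observation is that $\mathbf{c}_{n}^{\star}$ is a deterministic function only of $\{\tilde{\mathbf{h}}_{n,k}\}_{k=2}^{K}$ and hence independent of $\mathbf{h}_{n,1}$, so $|\mathbf{c}_{n}^{\star\dagger}\mathbf{h}_{n,1}|^{2}$ is exponentially distributed with mean one, and its maximum over $|\mathcal{S}|$ independent copies scales as $\log|\mathcal{S}|$, matching the asymptotic in \eqref{eqn:Rgain_asymptotic}. By Theorem \ref{theorem:N_scaling_DoF_gain} the DoF gain $d_1$ is attained whenever $|\mathcal{S}|\propto e^{P^{(d_1-1)}}$, which combined with the first-stage filtering probability forces $N\propto e^{P^{(d_1-1)}}\cdot P^{(1-d_2)(K-N_r)}$, exactly \eqref{eqn:general_scaling(d>1)}. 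The main obstacle, and the step that requires the most care, is decoupling the two selection stages: one must argue that conditioning on a user's IA measure being small leaves the conditional distribution of $\mathbf{h}_{n,1}$ (and therefore of $|\mathbf{c}_{n}^{\star\dagger}\mathbf{h}_{n,1}|^{2}$) unchanged, and that the concentration of $|\mathcal{S}|$ around its mean does not perturb the $\log|\mathcal{S}|$ asymptotic in a way that alters the DoF calculation.
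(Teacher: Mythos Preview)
Your proposal is correct and rests on the same core idea as the paper---a two-stage selection that invokes Theorem~\ref{theorem:N_scaling_DoF_loss} for the loss term and Theorem~\ref{theorem:N_scaling_DoF_gain} for the gain term, exploiting the independence of $\mathbf{h}_{n,1}$ from $\{\mathbf{h}_{n,k}\}_{k\ge 2}$---but the implementation differs in two respects. First, the paper reverses the order: it \emph{deterministically partitions} the $N$ users into $N_2$ subgroups of size $N_1$ (with $N_1N_2=N$), picks the user with the largest desired-channel gain in each subgroup (Stage~1, securing the gain term via $N_1\propto e^{P^{(d_1-1)}}$), and then selects among the $N_2$ survivors the one minimizing the loss term (Stage~2, via $N_2\propto P^{(1-d_2)(K-N_r)}$). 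Second, the paper treats the two theorems as black boxes and does not spell out the choice of postprocessing vector or the decoupling argument, whereas you make both explicit by fixing $\mathbf{v}_{n^\star}=\mathbf{c}_{n^\star}^\star$ and noting that $|\mathbf{c}_n^{\star\dagger}\mathbf{h}_{n,1}|^2$ is exponential and independent of the filtering event. Your threshold-filter approach buys a cleaner independence argument at the price of a binomial concentration step; the paper's partition approach is shorter but leaves the interaction between the postprocessing vector and the Stage~1 selection implicit.
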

\begin{proof}
As a target DoF achieving scheme, we consider a two-stage user
selection scheme; the first stage is to increase the DoF gain term,
and the second stage is to decrease the DoF loss term.
The considered two stage user selection strategy is illustrated in
Fig. \ref{fig:user_selection_2stage}. We randomly divide total $N$
users into $N_2$ subgroups having $N_1$ users each such that $N_1
N_2=N$.
Then, the user selection in each stage is performed as follows.
\begin{itemize}
\item Stage 1: In each subgroup, a single user having the largest
channel gain is selected among $N_1$ users. As a result, we have
$N_2$ selected users after Stage 1.

\item Stage 2: Among the $N_2$ users, the transmitter selects a single
user to minimize the DoF loss term.

\end{itemize}
In Stage 1, the DoF gain term $d_1$ is obtained at each selected
user when
\begin{align}
 N_1 \propto e^{P^{(d_1-1)}}
\end{align}
as stated in Theorem \ref{theorem:N_scaling_DoF_gain}.
In Stage 2, we can make the DoF loss term $d_2$ when
\begin{align}
    N_2\propto P^{(1-d_2)(K-N_r)}
\end{align}
as shown in Theorem \ref{theorem:N_scaling_DoF_loss}.
Thus, the target DoF $d~(>1)$ with the strategy $(d_1, d_2)$ such
that $d=d_1-d_2$ can be obtained by the user scaling $N_1N_2$, which
is given in \eqref{eqn:general_scaling(d>1)}.
\end{proof}

From Lemma \ref{lemma:general_scaling_d>1}, we obtain the optimal
DoF achieving strategy for the target DoF $d~ (>1)$ in following
theorem.

\begin{theorem}\label{theorem:scaling2}
The optimal target DoF achieving strategy for $d\in [1,\infty)$ is
to increase the DoF gain term to $d$ and to perfectly eliminate the
DoF loss, i.e., $(d_1^\star, d_2^\star) = (d, 0)$. Consequently, the
sufficient user scaling for target DoF $d~(>1)$ becomes
\begin{align}
 N \propto e^{P^{(d-1)}} P^{(K-N_r)}.
 \label{eqn:scaling(d>1)}
\end{align}
\end{theorem}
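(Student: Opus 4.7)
The plan is to reduce Theorem~\ref{theorem:scaling2} to an optimization problem over the feasible strategies $(d_1,d_2)$ with $d_1-d_2=d$ using Lemma~\ref{lemma:general_scaling_d>1} as the starting point. By Lemma~\ref{lemma:DoF_terms_range}, I only need to search within $d_1\in[1,\infty)$ and $d_2\in[0,1]$, and by Lemma~\ref{lemma:general_scaling_d>1} every such strategy is realized by the two-stage scheme with sufficient user scaling
\begin{align}
N(d_1,d_2)\propto e^{P^{(d_1-1)}}P^{(1-d_2)(K-N_r)}. \NN
\end{align}
So it suffices to show that this expression is minimized, in its dependence on $P$, by taking $d_2=0$ (forcing $d_1=d$), which yields \eqref{eqn:scaling(d>1)}.

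Next, I would compare two candidate strategies $(d,0)$ and $(d+\Delta,\Delta)$ for arbitrary $\Delta\in(0,1]$, both of which achieve the same target DoF $d$. The ratio of their required user scalings is
\begin{align}
\frac{N(d+\Delta,\Delta)}{N(d,0)}\propto \frac{e^{P^{(d-1+\Delta)}}}{e^{P^{(d-1)}}}\cdot P^{-\Delta(K-N_r)}
=\exp\!\Bigl(P^{(d-1)}\bigl(P^{\Delta}-1\bigr)\Bigr)\cdot P^{-\Delta(K-N_r)}. \NN
\end{align}
Since $d>1$, the exponent $P^{(d-1)}(P^\Delta-1)$ grows without bound as $P\to\infty$, so the exponential factor dominates the polynomial factor $P^{-\Delta(K-N_r)}$ for any $\Delta>0$. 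Hence $N(d,0)$ is strictly smaller in the high-$P$ regime than $N(d+\Delta,\Delta)$ for any $\Delta>0$.

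For the boundary case, I would separately argue that the strategy $(d_1^\star,d_2^\star)=(d,0)$ is better than the "pure polynomial" choice that would correspond to $d_1=1$ (which is infeasible here since $d>1$ forces $d_1>1$), so the boundary optimum really lies at $d_2=0$. Combining this with Theorem~\ref{theorem:N_scaling_DoF_loss} and Theorem~\ref{theorem:N_scaling_DoF_gain} (applied to $d_2=0$ and $d_1=d$ respectively), the two-stage scheme of Lemma~\ref{lemma:general_scaling_d>1} with $N_1\propto e^{P^{(d-1)}}$ and $N_2\propto P^{(K-N_r)}$ achieves the target DoF~$d$ with the stated scaling.

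The main obstacle is the rigorous comparison of growth rates: I must argue that \emph{any} increase of $d_2$ above $0$ forces the exponential factor $e^{P^{(d-1+\Delta)}}$ to outstrip the polynomial savings $P^{-\Delta(K-N_r)}$ in the $P\to\infty$ limit. The algebra above settles this cleanly, but one must be careful that the ``$\propto$'' symbols in the user-scaling statements preserve the correct asymptotic ordering; this follows because the sufficient conditions in Theorems~\ref{theorem:N_scaling_DoF_loss} and~\ref{theorem:N_scaling_DoF_gain} were obtained from explicit upper bounds whose leading-order behavior in $P$ is exactly the exponents appearing above. No other subtlety arises, since the constraint $d_1-d_2=d$ makes the problem effectively one-dimensional in $\Delta\ge 0$.
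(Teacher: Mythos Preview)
Your approach is correct and close to the paper's, but the optimality argument is organized differently. The paper, like you, invokes Lemma~\ref{lemma:general_scaling_d>1} to establish that the scaling \eqref{eqn:scaling(d>1)} is \emph{sufficient} for the strategy $(d,0)$. For optimality, however, the paper does not compare the sufficient scalings $N(d_1,d_2)$ directly as you do; instead it plugs the $(d,0)$ scaling into the tight asymptotic \eqref{eqn:Rgain_asymptotic} and computes
\[
\DoF{\log_2\!\bigl(P\log(e^{P^{(d-1)}}P^{(K-N_r)})\bigr)}=d,
\]
concluding that with this many users the DoF gain term cannot exceed $d$ under \emph{any} scheme, and hence no strategy $(d+\Delta,\Delta)$ is feasible at this scaling. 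This is a direct necessity argument.

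Your route---comparing the growth rates of $N(d,0)$ and $N(d+\Delta,\Delta)$---is algebraically clean, but as you yourself note, it relies on knowing that the Lemma~\ref{lemma:general_scaling_d>1} scalings are not just sufficient but essentially tight. That tightness is exactly what \eqref{eqn:Rgain_asymptotic} provides (it comes from matching upper and lower bounds on the maximum rate gain), so your final paragraph is pointing to the right justification; the paper just makes this step explicit rather than leaving it as a caveat. In short: same ingredients, slightly different packaging---the paper's version sidesteps the ``sufficient vs.\ necessary'' subtlety by invoking \eqref{eqn:Rgain_asymptotic} directly, whereas you reach the same conclusion via a ratio comparison plus an appeal to tightness.
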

\begin{proof}
The proof is similar to that of Theorem \ref{theorem:scaling1}.
From Lemma \ref{lemma:general_scaling_d>1}, we can obtain the target
DoF $d~(>1)$ by the strategy $(d, 0)$ with the sufficient user
scaling given in \eqref{eqn:scaling(d>1)}.
However, this scaling cannot increase the DoF gain term larger than
$d$ even when the user scaling is only used to increase the DoF gain
term. Substituting \eqref{eqn:scaling(d>1)} into
\eqref{eqn:Rgain_asymptotic}, we still have
\begin{align}
 \DoF{\log_2\left(P \log( e^{P^{(d-1)}} P^{(K-N_r)})\right)} = d.
 \label{eqn:still_d}
\end{align}
This implicates that the user scaling given in
\eqref{eqn:scaling(d>1)} is sufficient for the strategy $(d, 0)$ but
not enough for other strategies $(d+\Delta, 1)$ as well as
$(d+\Delta, \Delta)$ which requires the higher user scaling than
that of $(d+\Delta, 1)$, where $\Delta \in(0, 1]$. Therefore, the
optimal strategy for the target DoF $d~(>1)$ becomes $(d_1^\star,
d_2^\star)=(d,0)$.
\end{proof}


In Fig. \ref{fig:DoF_achieving_strategy}, the optimal DoF achieving
strategy $(d_1^\star, d_2^\star)$ is plotted according to the target
DoF $d ~(=d_1-d_2)$.

\JHnewpage
\section{Practical User Selection Schemes}

In this section, we discuss how the optimal target DoF achieving
strategy can be realized by practical user selection schemes.
The practical schemes considered in this section require no
cooperation and no information exchange among the transmitters.
For practical scenarios, we assume that each user has knowledge of
channel state information (CSI) of the direct channel and the
covariance matrix of the received signal without explicit knowledge
of CSI of the interfering channels. That is, the $n$th user knows
CSI of its own desired channel $\mathbf{h}_{n,1}$ and the covariance
matrix of the received signal $\mathbb{E} [\mathbf{y}_n \mathbf{y}_n
^\dagger] = \mathbf{I}_{N_r} + P\sum_{k=1}^K \mathbf{h}_{n,k}
\mathbf{h}_{n,k} ^\dagger$. From these values, the user $n$ easily
obtains the interference covariance matrix denoted by $\mathbf{R}_n
\triangleq P\sum_{k=2}^K \mathbf{h}_{n,k} \mathbf{h}_{n,k} ^\dagger$
such as
\begin{align}
  \mathbf{R}_n
  &=\mathbb{E}[\mathbf{y}_n \mathbf{y}_n ^\dagger]
    - P\mathbf{h}_{n,1}\mathbf{h}_{n,1}^\dagger
    -\mathbf{I}_{N_r}.
\end{align}
Therefore, the achievable rate at the first transmitter given in
\eqref{eqn:C_n} can be rewritten by
\begin{align}
 \R
 &\triangleq \mathbb{E}\log_2
    \left(1+
    \frac{ P\vert \mathbf{v}_{n^*}^\dagger\h{n^*,1} \vert^2}
    {\mathbf{v}_{n^*}^\dagger (\mathbf{I}_{N_r} +
    \mathbf{R}_{n^*})
     \mathbf{v}_{n^*}}\right)\NN.
\end{align}

To increase $\R$, various user selection schemes can be considered,
but we focus on several popular techniques in the following
subsections -- to maximize the postprocessed SNR
$\big(\textrm{i.e.,~} P\vert \mathbf{v}_{n^*}^\dagger\h{n^*,1}
\vert^2  \big)$,
to minimize the postprocessed INR $\big(\textrm{i.e.,~}
\mathbf{v}_{n^*}^\dagger \mathbf{R}_{n^*} \mathbf{v}_{n^*} \big)$,
and to maximize the postprocessed SINR $\Big(\textrm{i.e.,~} \frac{
P\vert \mathbf{v}_{n^*}^\dagger\h{n^*,1} \vert^2}
{\mathbf{v}_{n^*}^\dagger (\mathbf{I}_{N_r} + \mathbf{R}_{n^*})
\mathbf{v}_{n^*}} \Big)$.

\subsection{The Maximum Postprocessed SNR User Selection (MAX-SNR)}

In the MAX-SNR user selection scheme, each user maximizes the
postprocessed SNR, and the transmitter selects the user having the
maximum postprocessed SNR. Consequently, the postprocessed SNR at
the selected user becomes
\begin{align}
 \max_n \left[ \max_{\mathbf{v}_n}
 ~ P\vert \mathbf{v}_{n}^\dagger\h{n,1}\vert^2\right]
 \stackrel{(a)}{=} \max_n P\Vert \h{n,1}\Vert^2,
\end{align}
where the equality $(a)$ holds when the $n$th user adopts the
postprocessing vector $\mathbf{v}_n^\SNR = \mathbf{h}_{n,1} /
\Vert\mathbf{h}_{n,1}\Vert$.
Thus, the selected user denoted by $n^*_\SNR$ becomes
\begin{align}
    n_\SNR^* = \argmax{n}~ P\Vert \h{n,1}\Vert^2,
\end{align}
and the desired channel gain at each user ($\Vert \h{n,1}\Vert^2$
for the user $n$) should be informed to the transmitter.

Using the MAX-SNR scheme, the transmitter can only increase the DoF
gain term while the DoF loss term remains one. Although the MAX-SNR
scheme can achieve the target DoF $d$ by the strategy
$(d_1,d_2)=(1+d,1)$, it is not optimal target DoF achieving
strategy. The required user scaling for the target DoF $d~(> 0)$ by
the MAX-SNR scheme becomes
\begin{align}
 N\propto e^{P^{(1+d)}},\NN
\end{align}
as shown in Theorem \ref{theorem:N_scaling_DoF_gain}. This user
scaling is of course higher than \eqref{eqn:scaling(d<1)} for the
target DoF $d \in (0,1]$ and \eqref{eqn:scaling(d>1)} for the target
DoF $d ~(> 1)$ since the MAX-SNR does not realize the optimal target
achieving strategy. In other words, one can easily find that
\begin{align}
 &\lim_{P\to\infty} \big[e^{P^{(1+d)}}/ P^{d(K-N_r)}\big]
 =\infty \qquad~~~~~~~\textrm{for~~}d\in(0,1],\NNL
 &\lim_{P\to\infty} \big[e^{P^{(1+d)}}/ \big(e^{P^{(d-1)}} P^{(K-N_r)}\big)\big]
 =\infty \quad\textrm{for~~}d>1.\NN
\end{align}

\subsection{The Minimum Postprocessed INR User Selection (MIN-INR)}

In the MIN-INR user selection scheme, each user minimizes the
postprocessed INR, and the transmitter selects the user having the
minimum postprocessed INR.
Thus, the postprocessed INR at the selected user becomes
\begin{align}
 \min_n \left[ \min_{\mathbf{v}_n}
    ~ \mathbf{v}_{n}^\dagger \mathbf{R}_{n} \mathbf{v}_{n}
 \right]
 \stackrel{(a)}{=}
    \min_n \left[\Lambda _{\min} \left(\mathbf{R}_{n}\right)\right],
\end{align}
where the equality $(a)$ is obtained by the postprocessing vector of
the $n$th user
\begin{align}
 \mathbf{v}_n^\INR = V_{\min} \left(\mathbf{R}_{n}\right). \label{eqn:v_n_INR}
\end{align}
The required feedback information from the $n$th user is $\Lambda
_{\min} \big(\mathbf{R}_n\big)$, and index of the selected user
denoted by $n_\INR^*$ becomes
\begin{align}
 n_\INR^* = \argmin{n} \left[\Lambda _{\min}
 \left(\mathbf{R}_{n}\right)\right].
\end{align}
Note that this scheme minimizes the rate loss term defined in
\eqref{eqn:Rloss}.

Using the MIN-INR scheme, the transmitter can decrease the DoF loss
term while the DoF gain term remains to be one. Therefore, the
MIN-INR scheme realizes the optimal target DoF achieving strategy
$(d_1,d_2)=(1,1-d)$ for the target DoF $d\in[0,1]$. The required
number of users by the MIN-INR scheme for the target DoF $d\in[0,1]$
scales like
\begin{align}
 N\propto P^{d(K-N_r)},\NN
\end{align} which is the required user scaling of the optimal
target DoF achieving strategy when the target DoF is $d\in [0,1]$ as
shown in Theorem \ref{theorem:scaling1}.


\JHnewpage

\JHnewpage

\subsection{The Maximum Postprocessed SINR User Selection (MAX-SINR)}
The MAX-SINR user selection scheme is known to maximize the
achievable rate at the transmitter although it requires additional
complexity for postprocessing at the receivers. The achievable rate
by the MAX-SINR scheme denoted by $\R_\SINR$ becomes
\begin{align}
 \R_\SINR \triangleq \mathbb{E}\left[\max_{n,\mathbf{v}_n}~
    \log_2
    \left(1+
    \frac{ P\vert \mathbf{v}_n^\dagger\h{n,1} \vert^2}
    {\mathbf{v}_n^\dagger (\mathbf{I}_{N_r} +
    \mathbf{R}_n) \mathbf{v}_n}
    \right)\right]. \label{eqn:R_SINR}
\end{align}
At each channel realization, the postprocessed SINR at the selected
user is given by
\begin{align}
 \max_n\left[\max_{\mathbf{v}_n}~
    \frac{ P\vert \mathbf{v}_n^\dagger\h{n,1} \vert^2}
    {\mathbf{v}_n^\dagger (\mathbf{I}_{N_r} +
    \mathbf{R}_n)
    \mathbf{v}_n}
    \right]. \label{eqn:MAX_SINR}
\end{align}
To maximize the postprocessed SINR, the $n$th user adopts the
postprocessing vector given by
\begin{align}
 \mathbf{v}_n^\SINR
 = \frac{(\mathbf{I}_{N_r} + \mathbf{R}_n)^{-1}\mathbf{h}_{n,1}}
    {\Vert(\mathbf{I}_{N_r} +
    \mathbf{R}_n)^{-1}\mathbf{h}_{n,1}\Vert}, \NN
\end{align}
which is identical with the MMSE-IRC in \cite{OMAAT2011}.
The corresponding postprocessed SINR at user $n$ becomes $P
\mathbf{h}_{n,1}^\dagger (\mathbf{I}_{N_r} + \mathbf{R}_n)^{-1}
\mathbf{h}_{n,1}$ \cite{TRWJ2004}, and hence the selected user at
the transmitter denoted by $n_\SINR^*$ is given by
\begin{align}
 n_\SINR^* = \argmax{n} ~
 \mathbf{h}_{n,1}^\dagger (\mathbf{I}_{N_r} + \mathbf{R}_n)^{-1}
 \mathbf{h}_{n,1} .
\end{align}

\begin{lemma}\label{lemma:scaling1}
To obtain the target DoF $d\in[0,1]$, the required user scaling of
the MAX-SINR scheme is exactly the same as that of the MIN-INR
scheme.
\end{lemma}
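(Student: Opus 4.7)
My plan is to sandwich $\mathrm{DoF}(\R_\SINR)$ between $d$ and $d$ under the scaling $N \propto P^{d(K-N_r)}$, and then note that the same argument with any $d'<d$ in place of $d$ rules out strictly smaller scalings, forcing MAX-SINR's required scaling to coincide with MIN-INR's.

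The lower bound is immediate: because MAX-SINR jointly optimizes the postprocessed SINR over all user indices and unit postprocessing vectors, $\R_\SINR \ge \R_\INR$ holds pointwise in the channel realization and hence in expectation. The preceding MIN-INR subsection already establishes $\mathrm{DoF}(\R_\INR) = d$ under $N \propto P^{d(K-N_r)}$, so $\mathrm{DoF}(\R_\SINR) \ge d$.

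For the matching upper bound I would decompose $\R_\SINR = \R^+_\SINR - \R^-_\SINR$ and control each term separately. The DoF gain term $d_1^\SINR$ is captured by the inequality chain leading to the asymptotic \eqref{eqn:Rgain_asymptotic}, which depends only on the marginal statistics of $\{\mathbf{h}_{n,k}\}$ and the user count $N$; under polynomial scaling $N \propto P^{d(K-N_r)}$, evaluating it as in \eqref{eqn:DoF_gain_term1} gives $d_1^\SINR \le 1$, and the trivial bound $d_1 \ge 1$ then pins $d_1^\SINR = 1$. For the DoF loss term, the key observation is that for every user $n$ and every unit vector $\mathbf{v}$, $\mathbf{v}^\dagger \mathbf{R}_n \mathbf{v} \ge \Lambda_{\min}(\mathbf{R}_n)$, while MIN-INR by construction attains $\min_n \Lambda_{\min}(\mathbf{R}_n)$. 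Hence $\R^-_\SINR \ge \R^-_\INR$ pointwise, forcing $d_2^\SINR \ge 1-d$. Combining the two bounds yields $\mathrm{DoF}(\R_\SINR) \le 1-(1-d) = d$. Rerunning the same argument with any $d'<d$ in place of $d$ gives $\mathrm{DoF}(\R_\SINR) \le d' < d$ at $N \propto P^{d'(K-N_r)}$, so any strictly smaller polynomial scaling fails to reach the target DoF $d$.

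The main obstacle is the pointwise inequality $\R^-_\SINR \ge \R^-_\INR$: MAX-SINR and MIN-INR generally select different users, so the comparison is not automatic. It holds because $\min_n \Lambda_{\min}(\mathbf{R}_n)$ is precisely the infimum of $\mathbf{v}^\dagger \mathbf{R}_n \mathbf{v}$ over all pairs $(n, \mathbf{v})$, and MIN-INR attains this infimum by design, leaving no room for MAX-SINR's $\R^-$ to be smaller even though its own user choice is governed by a different criterion.
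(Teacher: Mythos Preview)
Your proposal is correct and takes essentially the same approach as the paper: the paper compresses your argument into the single chain $\R_\INR \le \R_\SINR \le \R^+_\SINR - \R^-_\INR$ together with $\tDoF{\R^+_\SINR}=1$ under polynomial scaling, which is exactly your lower bound from MAX-SINR's optimality, your pointwise inequality $\R^-_\SINR \ge \R^-_\INR$, and your gain-term cap via \eqref{eqn:DoF_gain_term1}. Your explicit justification that $\min_n \Lambda_{\min}(\mathbf{R}_n)$ is the infimum of $\mathbf{v}^\dagger \mathbf{R}_n \mathbf{v}$ over all pairs $(n,\mathbf{v})$ fills in a step the paper leaves implicit.
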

\begin{proof}
From the fact that
\begin{align}
 \R_\INR \le \R_\SINR
 \le \R^+_\SINR - \R^-_\INR,
\end{align}
we obtain
\begin{align}
 &\DoF{\R_\INR} \le \DoF{\R_\SINR}
 \le \DoF{\R^+_\SINR} - \DoF{\R^-_\INR}
    \stackrel{(a)}{=} \DoF{\R_\INR},
\end{align}
where the equality $(a)$ is because $\tDoF{\R^+_\SINR}=1$ as shown
in the proof of Theorem \ref{theorem:scaling1}. Therefore, the
required user scaling for $\tDoF{\R_\SINR}=d$ is exactly the same as
the required user scaling for $\tDoF{\R_\INR}=d$, equivalently, for
$\tDoF{\R^-_\INR}=1-d$.
\end{proof}

Lemma \ref{lemma:scaling1} indicates that the MAX-SINR scheme
realizes the optimal DoF achieving strategy $(d_1^\star,
d_2^\star)=(1,1-d)$ for the target DoF $d\in[0,1]$.

\begin{lemma}\label{lemma:scaling2}
The MAX-SINR scheme realizes the DoF achieving strategy $(d_1^\star,
d_2^\star) = (d, 0)$ whenever the target DoF $d$ is greater than 1.
\end{lemma}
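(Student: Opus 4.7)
The plan is to parallel Lemma~\ref{lemma:scaling1}, but using the two-stage construction from Lemma~\ref{lemma:general_scaling_d>1} (with the assignment $d_1=d$, $d_2=0$) as the benchmark in place of MIN-INR. Under the user scaling $N\propto e^{P^{(d-1)}}P^{(K-N_r)}$ dictated by Theorem~\ref{theorem:scaling2}, I will argue separately that $\tDoF{\R_\SINR}\ge d$ and that $\tDoF{\R_\SINR^+}\le d$, and then combine these with $\R_\SINR^-\ge 0$ to force $\bigl(\tDoF{\R_\SINR^+},\tDoF{\R_\SINR^-}\bigr)=(d,0)$.

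For the lower bound, observe that at every channel realization the MAX-SINR rule jointly optimizes the post-processed SINR over the user index $n$ and the combining vector $\mathbf{v}_n$, so its per-realization rate dominates that of any other admissible $(n,\mathbf{v}_n)$ pair. Instantiating this comparison with the two-stage selection from the proof of Lemma~\ref{lemma:general_scaling_d>1}---which attains the strategy $(d,0)$ using exactly $N\propto e^{P^{(d-1)}}P^{(K-N_r)}$---yields $\R_\SINR\ge \R_{\text{two-stage}}$, and hence $\tDoF{\R_\SINR}\ge d$.

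For the upper bound on the gain term, the constraint $\Vert\mathbf{v}_n\Vert=1$ implies $|\mathbf{v}_n^\dagger\h{n,k}|^2\le\Vert\h{n,k}\Vert^2$ irrespective of how $\mathbf{v}_n$ is chosen, so $\R_\SINR^+$ is bounded above by $\mathbb{E}\bigl[\max_n \log_2(1+PK\max_k\Vert\h{n,k}\Vert^2)\bigr]$. This is exactly the expression that controls \eqref{eqn:Rgain_upper_bound} in Section~\ref{sec:DoF_gain_scaling} and behaves asymptotically like $\log_2(1+PK\log(KN))$. Plugging in $N\propto e^{P^{(d-1)}}P^{(K-N_r)}$ gives $\log N\sim P^{(d-1)}$, so $PK\log(KN)\sim P^d$ and the whole expression scales as $d\log_2 P$; hence $\tDoF{\R_\SINR^+}\le d$.

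The remainder is arithmetic. From $\R_\SINR=\R_\SINR^+-\R_\SINR^-$ together with $\R_\SINR^-\ge 0$ we obtain $\tDoF{\R_\SINR^+}\ge\tDoF{\R_\SINR}\ge d$; combined with $\tDoF{\R_\SINR^+}\le d$ this pins $\tDoF{\R_\SINR^+}=d$, and subtracting forces $\tDoF{\R_\SINR^-}=\tDoF{\R_\SINR^+}-\tDoF{\R_\SINR}\le 0$. Non-negativity of $\R_\SINR^-$ then gives $\tDoF{\R_\SINR^-}=0$, so MAX-SINR realizes $(d_1^\star,d_2^\star)=(d,0)$ as claimed. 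The only delicate point I anticipate is justifying that the Section~\ref{sec:DoF_gain_scaling} upper bound still applies when $\mathbf{v}_n^\SINR$ is tuned for SINR rather than signal energy, but the inequality $|\mathbf{v}^\dagger\mathbf{h}|^2\le\Vert\mathbf{h}\Vert^2$ is vector-agnostic, so the transfer is automatic.
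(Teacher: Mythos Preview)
Your proposal is correct and follows essentially the same approach as the paper: both arguments establish $\tDoF{\R_\SINR}\ge d$ by invoking the optimality of MAX-SINR over the two-stage scheme, bound $\tDoF{\R_\SINR^+}\le d$ via the same upper bound \eqref{eqn:Rgain_upper_bound}/\eqref{eqn:still_d} on the rate-gain term, and then force $\tDoF{\R_\SINR^-}=0$ by subtraction. Your write-up is somewhat more explicit about the Cauchy--Schwarz step and the final arithmetic, but the logical skeleton is identical to the paper's proof.
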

\begin{proof}
Since the MAX-SINR scheme is the optimal user selection scheme, it
achieves DoF $d~(>1)$ with the user scaling $N \propto e^{P^{(d-1)}}
P^{(K-N_r)}$ as stated in Theorem \ref{theorem:scaling2}.
From the definition of \eqref{eqn:Rgain_max}, we obtain $\R_\SINR^+
< \eqref{eqn:Rgain_max}$, and hence we have $\tDoF{\R_\SINR^+} \le
\tDoF{\eqref{eqn:Rgain_max}}$.
As shown in \eqref{eqn:still_d}, the sufficient user scaling for the
MAX-SINR scheme to obtain the target DoF $d$ cannot increase the DoF
gain term larger than $d$ even if the whole user scaling is only
devoted to increasing the DoF gain term.
This implicates that when we obtain the target DoF $d~(>1)$ by the
MAX-SINR scheme with the user scaling $N \propto e^{P^{(d-1)}}
P^{(K-N_r)}$, we obtain $\tDoF{\R_\SINR^-} = 0$ and have the DoF
gain $d$ at most (i.e., $\tDoF{\R_\SINR^+} = d$). Therefore, the
MAX-SINR scheme can only have $\left(\tDoF{\R_\SINR^+},
\tDoF{\R_\SINR^-}\right)=(d,0)$ if $\tDoF{\R_\SINR}=d~(>1)$.
\end{proof}

\subsection{Two-stage User Selection Scheme}

For the target DoF $d~(>1)$, the two-stage user selection scheme
described in the proof of Lemma \ref{lemma:general_scaling_d>1} can
be adopted. More specifically, the transmitter selects the users by
the MAX-SNR scheme in the first stage. Then, in the second stage,
the transmitter selects a single user by the MIN-INR scheme or the
MAX-SINR scheme. As shown in the proof of Lemma
\ref{lemma:general_scaling_d>1}, the two-stage user selection scheme
can realize the optimal target DoF achieving strategy for the target
DoF $d~(>1)$.

\section{Extension to $K$-transmitter Interfering MIMO Broadcast Channels}
\label{sec:extension}

In this section, we extend our system to interfering MIMO BC cases.
More specifically, each transmitter with $N_t$ antennas sends $N_t$
independent streams over $N_t$ orthonormal random beams using equal
power allocation.
Similar to the user selection procedure in \cite{SH2005}, each
transmitter broadcasts $N_t$ orthonormal random beams, and each user
feeds $N_t$ scalar values corresponding to all beams back to the
transmitter. The feedback information corresponding to each stream
such as SNR, INR, and SINR can be easily found in a similar way to
the SIMO case.
A single user is selected for each beam, but the same user can be
selected for different beams. However, it rarely occurs that
multiple streams are transmitted for a single user as the number of
users increases. When multiple streams are transmitted for a single
user, the user is assumed to decode each stream treating the other
streams as interferences. We denote the orthonormal random beams by
$\mathbf{u}_1, \ldots, \mathbf{u}_{N_t}$ which satisfies that
$\Vert\mathbf{u}_1 \Vert^2 = \cdots =\Vert \mathbf{u}_{N_t} \Vert^2
= 1$ and $\mathbf{u}_i^\dagger \mathbf{u}_j=0$ for all $i\ne j$. We
start from the following remark.

\begin{remark}Let $\mathbf{H}\in\mathbb{C}^{N_r\times N_t}$
be the channel matrix whose elements are i.i.d. Gaussian random
variables.
Then, for an arbitrary unitary matrix $\mathbf{U}\in
\mathbb{C}^{N_t\times N_t}$ (i.e., $\mathbf{U} ^\dagger \mathbf{U} =
\mathbf{U} \mathbf{U}^\dagger = \mathbf{I}_{N_t}$), the
distributions of $\mathbf{H}$ and $\mathbf{HU}$ are identical.
Since the $N_r$ columns of $\mathbf{H}$ are independent and
isotropic random vectors in $\mathbb{C}^{N_r}$, so are the $N_r$
columns of $\mathbf{HU} \in \mathbb{C}^{N_r\times N_t}$.
\end{remark}

Owing to Remark 1, the $K$-transmitter MIMO interfering BC is
statistically identical with the $KN_t$-transmitter SIMO interfering
BC.
Let $\mathbf{H}_{k,n}\in\mathbb{C}^{N_r\times N_t}$ be the channel
matrix from the $k$th transmitter to the $n$th user in the first
user group. Since the random beams satisfy that $[\mathbf{u}_1,
\ldots, \mathbf{u}_{N_t}]^\dagger [\mathbf{u}_1, \ldots,
\mathbf{u}_{N_t}] = [\mathbf{u}_1, \ldots, \mathbf{u}_{N_t}]
[\mathbf{u}_1, \ldots, \mathbf{u}_{N_t}]^\dagger =
\mathbf{I}_{N_t}$, the user $n$ in the first user group has $KN_t$
independent and isotropic channel vectors
\begin{align}
 \mathbf{H}_{k,n}\mathbf{u}_i\in\mathbb{C}^{N_r\times1}\qquad
    k\in[K] ~~ i\in[N_t],\NN
\end{align}
formed by the random beams and channel matrices from all
transmitters.

If the $n$th user in the first group is served by the $i$th random
beam, the user has desired channel $\mathbf{H}_{n,1}\mathbf{u}_i
\in\mathbb{C}^{N_t\times 1}$ and the $(KN_t-1)$ interfering
channels, which correspond to $(N_t-1)$ inter-stream interfering
channels $\{\mathbf{H}_{n,1} \mathbf{u}_j\}_{j\ne i}$ and $(K-1)N_t$
inter-transmitter interfering channels
\begin{align}
 \bigcup_{j\in[N_t]}
    \{\mathbf{H}_{n,k}\mathbf{u}_j\}_{k=2}^K.\NN
\end{align}
Consequently, each random beam can be regarded as a single antenna
transmitter with the transmit power $P/N_t$. This fact leads to the
following theorems.

\begin{theorem}[MIMO BC]\label{theorem:MIMO_BC} In a MIMO BC
where a transmitter with $N_t$ antennas supports $N_t$ users among
$N$ users with $N_r (< N_t)$ antennas each, the optimal DoF
achieving strategy for the target DoF $d~(\in [0,N_t])$ is $(N_t,
N_t-d)$ and requires the number of users to scale as $N\propto
P^{(d/N_t)(N_t-N_r)}$. For the target DoF $d~(>N_t)$, the optimal
DoF achieving strategy is $(d, 0)$ and requires the number of users
to scale as $N \propto e^{P^{(d/N_t-1)}} P^{(N_t-N_r)}$.
\end{theorem}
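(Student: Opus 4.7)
The plan is to reduce the single-cell MIMO BC with $N_t$ orthonormal random beams to the previously analyzed $K$-transmitter SIMO setting, applied on a per-beam basis, and then invoke Theorems~\ref{theorem:scaling1} and \ref{theorem:scaling2}. First, I would use Remark~1 to argue that the matrix $\mathbf{H}_{n,1} [\mathbf{u}_1, \ldots, \mathbf{u}_{N_t}]$ is statistically identical to $\mathbf{H}_{n,1}$, so its $N_t$ columns $\{\mathbf{H}_{n,1}\mathbf{u}_i\}_{i=1}^{N_t}$ are i.i.d.\ isotropic Gaussian vectors in $\mathbb{C}^{N_r}$. Hence, from the viewpoint of a user assigned to beam $i$, the observation model consists of a desired channel $\mathbf{H}_{n,1}\mathbf{u}_i$ and $(N_t-1)$ inter-stream interferers $\{\mathbf{H}_{n,1}\mathbf{u}_j\}_{j\ne i}$, each transmitted with per-beam power $P/N_t$. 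Structurally, this is exactly the $K$-transmitter SIMO IC studied in Sections~III--IV with the substitution $K \leftarrow N_t$ and $P \leftarrow P/N_t$.

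Next, since the $N_t$ beams are served to (possibly different) users selected independently, the total rate of the transmitter splits into the sum of per-beam rates, and correspondingly the total DoF, DoF-gain term, and DoF-loss term are additive across beams. By the per-beam symmetry, it suffices to let each beam target the same DoF $d/N_t$ and then sum. For $d \in [0, N_t]$, i.e.\ per-beam target $d/N_t \in [0,1]$, Theorem~\ref{theorem:scaling1} gives the optimal per-beam strategy $(1, 1-d/N_t)$, achievable with user scaling $N \propto (P/N_t)^{(d/N_t)(N_t - N_r)}$, which has the same exponent in $P$ as $P^{(d/N_t)(N_t - N_r)}$. Summing the per-beam gains and losses over the $N_t$ beams yields the overall strategy $(N_t, N_t - d)$ as claimed.

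For $d > N_t$, i.e.\ per-beam target $d/N_t > 1$, I would invoke Theorem~\ref{theorem:scaling2} per beam, giving the optimal per-beam strategy $(d/N_t, 0)$ with required scaling $N \propto e^{(P/N_t)^{(d/N_t - 1)}} (P/N_t)^{N_t - N_r}$. Absorbing the $N_t$ constants into the proportionality yields $N \propto e^{P^{(d/N_t - 1)}} P^{(N_t - N_r)}$. Summing the per-beam terms gives $(d,0)$ as the overall strategy. Optimality (minimality of the user scaling) in both regimes follows directly from the per-beam optimality statements in Theorems~\ref{theorem:scaling1} and \ref{theorem:scaling2}, because any deviation from the optimal per-beam $(d_1^{(i)}, d_2^{(i)})$ strictly increases the required $N$ for that beam, and hence for the aggregate.

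The main obstacle is the first step, namely justifying that the inter-stream interferers $\{\mathbf{H}_{n,1}\mathbf{u}_j\}_{j\ne i}$ observed at a user are mutually independent and isotropic, and moreover independent of the desired direction $\mathbf{H}_{n,1}\mathbf{u}_i$, so that the per-beam channel really matches the hypotheses of Lemmas~\ref{lemma:prob_bar_D_M} and \ref{lemma:lambda_n1}. Once this equivalence is established via Remark~1 and the unitary invariance of i.i.d.\ complex Gaussian matrices, the remaining derivation is a routine rescaling $P \mapsto P/N_t$ inside the scaling laws of Theorems~\ref{theorem:scaling1} and \ref{theorem:scaling2}, combined with additivity of DoF over the $N_t$ independent beams.
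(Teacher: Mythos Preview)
Your proposal is correct and follows essentially the same approach as the paper: reduce the MIMO BC with $N_t$ orthonormal random beams to an $N_t$-transmitter SIMO IC via Remark~1, carry out the per-stream DoF analysis with target $d/N_t$, invoke Theorems~\ref{theorem:scaling1} and \ref{theorem:scaling2}, and sum over the $N_t$ beams. Your added care about the $P \leftarrow P/N_t$ rescaling and the per-beam optimality argument are minor elaborations of what the paper does implicitly.
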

\begin{proof} The DoF gain term $d_1 (>N_t)$ is obtained
when each stream achieves DoF gain term $d_1/N_t ~(>1)$. Thus, with
the same procedure given in Section \ref{sec:DoF_gain_scaling}, we
can easily show that the DoF gain term $d_1 (>N_t)$ is obtained when
$N\propto e^{P^{(d_1/N_t-1)}}$.
On the other hand, the DoF loss term $d_2~(\le N_t)$ is obtained
when each stream achieves DoF loss term per stream $d_2/N_t~ (\le
1)$. As stated earlier, using $N_t$ orthonormal random beams at the
transmitter each of which independently supports a single user, the
MIMO BC can be translated into an $N_t$-transmitter SIMO IC where
each transmitter supports one of $N$ users with $N_r$ antennas.
In this case, DoF loss term $d_2~(\le N_t)$, i.e., DoF loss
$d_2/N_t~(\le 1)$ per stream, is obtained when $N\propto
P^{(1-d_2/N_t)(N_t-N_r)}$.
Therefore, we can conclude that the optimal DoF achieving strategy
for the target DoF $d~(\in [0,N_t])$ is $(N_t, N_t-d)$ and requires
the number of users to scale as $N\propto P^{(d/N_t)(N_t-N_r)}$.
Also, for the target DoF $d~(>N_t)$, the optimal DoF achieving
strategy is $(d, 0)$ and requires the number of users to scale as $N
\propto e^{P^{(d/N_t-1)}} P^{(N_t-N_r)}$.
\end{proof}

\begin{theorem}[Interfering MIMO BC]\label{theorem:MIMO_IBC}
Consider a $K$-transmitter interfering MIMO BC where the $k$th
transmitter with $N_t^{(k)}$ antennas supports $N_t^{(k)}$ users
among $N^{(k)}$ users with $N_r^{(k)} (<T\triangleq \sum_{k=1}^K
N_t^{(k)})$ antennas each.
At the $k$th transmitter, the optimal DoF achieving strategy for the
target DoF $d~(\in [0,N_t^{(k)}])$ is $(N_t^{(k)}, N_t^{(k)}-d)$ and
requires the number of users to scale as $N\propto
P^{(d/N_t^{(k)})(T-N_r^{(k)})}$.
For the target DoF $d ~(>N_t)$, the optimal DoF achieving strategy
becomes $(d, 0)$ and requires the number of users to scale as
$N^{(k)} \propto e^{P^{(d/N_t^{(k)}-1)}} P^{(T-N_r^{(k)})}$.
\end{theorem}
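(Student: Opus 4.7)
The plan is to extend Theorem~\ref{theorem:MIMO_BC} by combining Remark~1 with the per-stream SIMO-IBC results established in Theorem~\ref{theorem:scaling1} and Theorem~\ref{theorem:scaling2}. The core observation is that once every transmitter $\ell$ employs $N_t^{(\ell)}$ orthonormal random beams, Remark~1 makes the whole system statistically equivalent to an ensemble of $T=\sum_{\ell} N_t^{(\ell)}$ effective single-antenna transmitters, each sending with power $P/N_t^{(\ell)}$ (the power normalization is irrelevant for DoF because $N_t^{(\ell)}$ is a constant).

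First I would fix the $k$th transmitter and track a user in its group served by one of its beams, say $\mathbf{u}_i$. The $T$ vectors $\{\mathbf{H}_{n,\ell}\mathbf{u}_j\}$ (over $\ell\in[K]$, $j\in[N_t^{(\ell)}]$) are independent isotropic random vectors in $\mathbb{C}^{N_r^{(k)}}$ by Remark~1; exactly one of them is the desired channel, while the remaining $T-1$ are interferers, namely $N_t^{(k)}-1$ inter-stream interferers and $T-N_t^{(k)}$ inter-transmitter interferers. Because the $T-1$ interferers are statistically indistinguishable, Lemma~\ref{lemma:prob_bar_D_M} and Lemma~\ref{lemma:lambda_n1} apply verbatim with the substitution $K\leftarrow T$ and $N_r\leftarrow N_r^{(k)}$, so each stream is precisely a SIMO IBC instance with $T$ single-antenna transmitters and an $N_r^{(k)}$-antenna receiver.

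Next I would distribute the target DoF $d$ evenly across the $N_t^{(k)}$ streams of transmitter $k$, assigning DoF $d/N_t^{(k)}$ per stream. When $d\in[0,N_t^{(k)}]$, each stream falls in the regime $d/N_t^{(k)}\in[0,1]$ of Theorem~\ref{theorem:scaling1}, which gives the per-stream optimal strategy $(1,1-d/N_t^{(k)})$ and the sufficient scaling $N^{(k)}\propto P^{(d/N_t^{(k)})(T-N_r^{(k)})}$; summing over the $N_t^{(k)}$ streams yields the aggregate strategy $(N_t^{(k)},N_t^{(k)}-d)$ with the same user scaling. When $d>N_t^{(k)}$, the per-stream target exceeds one and Theorem~\ref{theorem:scaling2} gives the per-stream optimal strategy $(d/N_t^{(k)},0)$ with scaling $N^{(k)}\propto e^{P^{(d/N_t^{(k)}-1)}}P^{(T-N_r^{(k)})}$, aggregating to the stated $(d,0)$. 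Optimality follows by repeating the contradiction argument in Theorem~\ref{theorem:scaling1} and Theorem~\ref{theorem:scaling2}: any alternative feasible $(d_1,d_2)$ with $d_1>d_1^\star$ in the appropriate regime requires strictly more users.

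The hard part will be handling two bookkeeping subtleties rather than any new technical ingredient. First, the per-stream reduction must treat inter-stream and inter-transmitter interferers on equal footing; this is precisely what Remark~1 delivers, since it makes all $T-1$ interferers i.i.d.\ isotropic after random beamforming. Second, the $N_t^{(k)}$ beams of transmitter $k$ could in principle select the same user from its pool of $N^{(k)}$ candidates; however, as already noted in Section~\ref{sec:extension}, this collision event has vanishing probability as $N^{(k)}\to\infty$ and therefore does not affect the asymptotic DoF. With these two points resolved, the theorem reduces to applying the SIMO results per stream and summing, and the optimality of $(d_1^\star,d_2^\star)$ at each transmitter is inherited from the SIMO case.
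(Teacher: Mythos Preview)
Your proposal is correct and follows essentially the same route as the paper: both arguments use Remark~1 to translate the $K$-transmitter MIMO IBC into a $T$-transmitter SIMO IBC per stream, distribute the target DoF evenly as $d/N_t^{(k)}$ across the $N_t^{(k)}$ beams, and then invoke Theorems~\ref{theorem:scaling1} and~\ref{theorem:scaling2} with the substitution $K\leftarrow T$, $N_r\leftarrow N_r^{(k)}$. Your write-up is in fact slightly more explicit than the paper's (you spell out the role of Lemmas~\ref{lemma:prob_bar_D_M}--\ref{lemma:lambda_n1}, the beam-collision issue, and the optimality contradiction), but the underlying argument is the same.
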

\begin{proof}
The proof is similar to that of Theorem \ref{theorem:MIMO_BC}.
The $k$th transmitter obtains DoF gain term $d_1~(>N_t^{(k)})$ when
each stream obtains DoF gain term $d_1/N_t^{(k)} (>1)$, and the
required user scaling is exactly given by $N^{(k)} \propto
e^{P^{(d_1/N_t^{(k)}-1)}}$.
On the other hand, the $k$th transmitter obtains $d_2~(\le
N_t^{(k)})$ when the DoF loss term per stream becomes
$d_2/N_t^{(k)}~ (\le 1)$.
Using $N_t^{(k)}$ orthonormal random beams at each transmitter each
of which independently supports a single user, the interfering MIMO
BC can be translated into an $T(=\sum_{k=1}^K
N_t^{(k)})$-transmitter SIMO IC where each transmitter supports a
single user among $N^{(k)}$ users with $N_r^{(k)}$ antennas.
Thus, the $k$th transmitter obtains the DoF loss term $d_2~(\le
N_t^{(k)})$ when $N^{(k)} \propto P ^{(1- d_2/N_t^{(k)})
(T-N_r^{(k)})}$.
Therefore, we can conclude that the optimal DoF achieving strategy
of the $k$th transmitter for the target DoF $d~(\in [0,N_t^{(k)}])$
is $(N_t^{(k)}, N_t^{(k)}-d)$ and obtained when the number of users
scales as $N^{(k)}\propto P^{(d/N_t^{(k)})(T-N_r^{(k)})}$.
Also, for the target DoF $d~(>N_t^{(k)})$, the optimal DoF achieving
strategy is $(d, 0)$ and requires the number of users to scale as
$N^{(k)} \propto e^{P^{(d/N_t^{(k)}-1)}} P^{(T-N_r^{(k)})}$.
\end{proof}

\color{black}

\JHnewpage
\section{Numerical Results}


In this section, we first compare achievable rates of the practical
user selection schemes for given number of users. Then, we check if
the target DoF can be achievable with increasing number of users by
showing achievable rates per transmitter for the practical user
selection schemes.
We have also considered two time division multiple access (TDMA)
schemes.
In the first TDMA scheme (TDMA1), a single transmitter operates at
each time so that $1/K$ DoF is achieved at each transmitter.
In the second TDMA scheme (TDMA2), only $N_r$ of $K$ transmitters
operate at each time so that $N_r/K$ DoF is achieved at each
transmitter.

Fig. \ref{fig:OIA_K4_a10_N10} shows the achievable rates of each
transmitter for various user selection schemes in IBC when there are
4 transmitters and each transmitter has 10 users with three receive
antennas each. It is confirmed that the achievable rates are
saturated in the high SNR region and the achievable DoF per
transmitter becomes zero for the fixed number of users.

Now, we show that the target DoF can be achievable if the number of
users properly scales. In Fig. \ref{fig:OIA_K4_a10_kN}, the number
of users scales as $N \propto P^{d(K-N_r)}$, i.e., $N\propto P$ for
the target DoF one. Specifically, two user scaling $N=P$ and
$N=0.5P$ are considered, and other configurations except the number
of users are the same as those in Fig. \ref{fig:OIA_K4_a10_N10}.
Fig. \ref{fig:OIA_K4_a10_kN} verifies that the MIN-INR and the
MAX-SINR schemes achieve DoF one per transmitter as predicted in
Theorem \ref{theorem:scaling1} and Lemma \ref{lemma:scaling1}.

In Fig. \ref{fig:OIA_K4_a10_N_powerP}, we consider two different
user scaling $N=P^{0.5}$ and $N=P^{1}$ from those in Fig.
\ref{fig:OIA_K4_a10_kN}.
According to Theorem \ref{theorem:scaling1} and Lemma
\ref{lemma:scaling1}, the achievable DoF at each transmitter by
either the MAX-SINR scheme or the MIN-INR scheme is $d$ when the
number of users scales as $N \propto P^d$. As predicted, Fig.
\ref{fig:OIA_K4_a10_N_powerP} shows that the achieved DoF per
transmitter is 0.5 and 1 when $N=P^{0.5}$ and $N=P^{1}$,
respectively, by either the MIN-INR scheme or the MAX-SINR scheme.

\section{Conclusions}

We first studied the optimal way of exploiting multiuser diversity
in the $K$-transmitter SIMO IBC where each transmitter with a single
antenna selects a user and the number of transmitters is larger than
the number of receive antennas at each user. We proved that the
multiuser dimensions should be used first for decreasing the DoF
loss caused by interfering signals; the whole multiuser dimensions
should be exploited to reduce the DoF loss term to $1-d$ for the
target DoF $d\in[0, N_t]$, while the multiuser dimensions should be
devoted to making the DoF loss zero and then to increasing the DoF
gain term to $d$ for the target DoF $d \in [N_t,\infty)$. We also
derived the sufficient user scaling for the target DoF. The DoF per
transmitter $d\in[0,N_t]$ is obtained when the number of users
scales as $N \propto P^{(d/N_t)(KN_t-N_r)}$, and the DoF per
transmitter $d \in [N_t,\infty)$ is achieved when the number of
users scales as $N \propto e^{P^{(d/N_t-1)}} P^{(KN_t-N_r)}$.
Also, we extended the results to the $K$-transmitter MIMO IBC where
each transmitter having the multiple antennas supports the multiple
users.


\vspace{.3in}

\newpage

\begin{figure}[!t]
\centering
  \includegraphics[width=.5777\columnwidth]{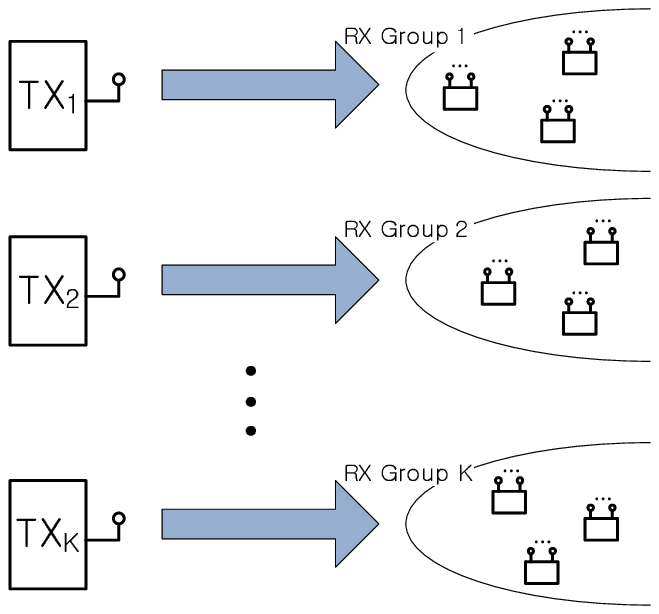}\\
  \caption{System model. Each transmitter selects and serves a single user in each group.}
  \label{fig:system_model}
\end{figure}

\begin{figure}[!t]
\centering
  \includegraphics[width=.8777\columnwidth]{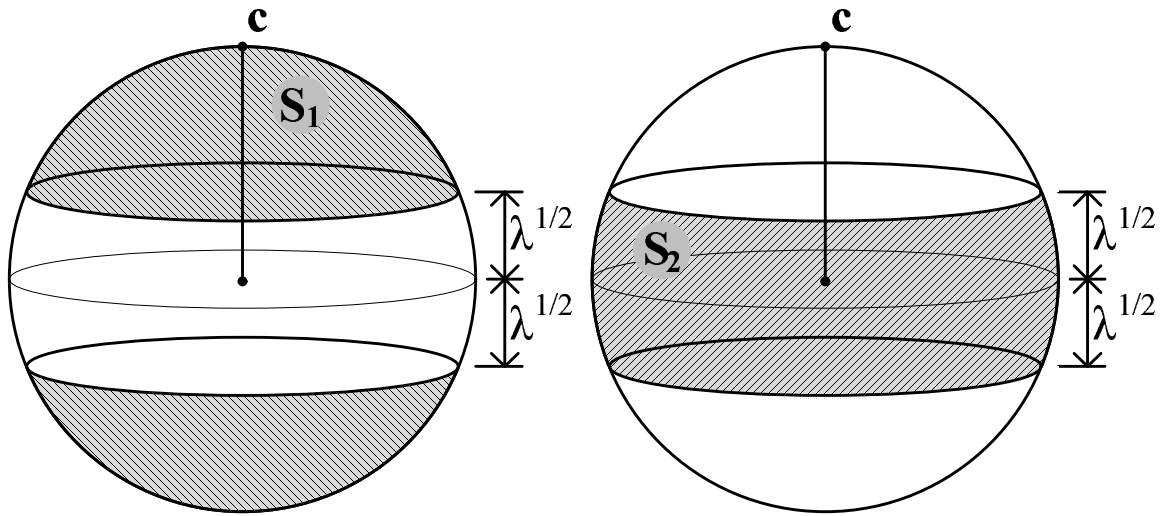}
  \caption{Illustration of $S_1(\mathbf{c}, \lambda)$ and $S_2(\mathbf{c}, \lambda)$
  in $\mathbb{R}^3$ case.}
  \label{fig:sphere_S1S2}
\end{figure}

\begin{figure}[!t]
\centering
  \includegraphics[width=\columnwidth]{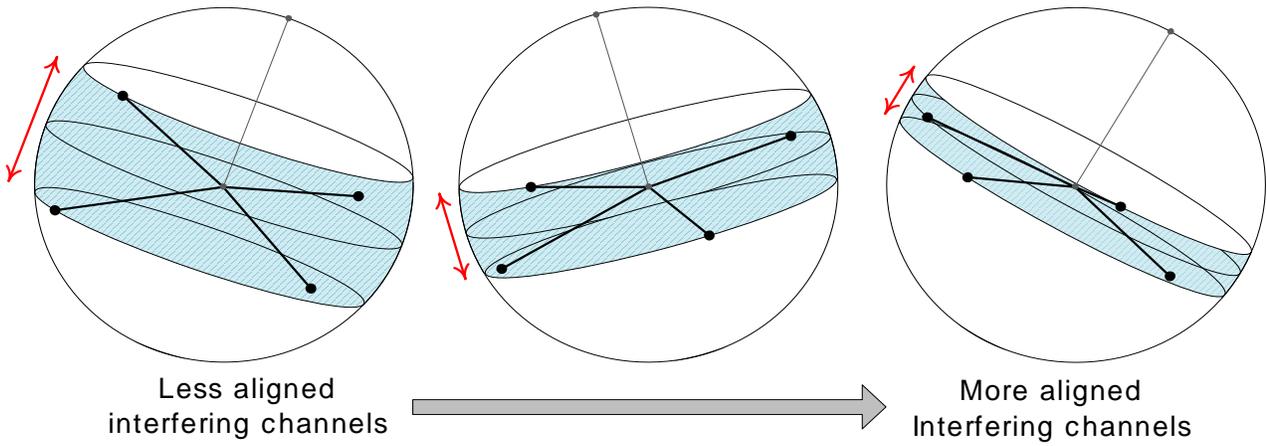}
  \caption{Graphical representations of the interfering channels
  and interference alignment measures.}
  \label{fig:IAM_concept}
\end{figure}

\begin{figure}[!t]
\centering
  \includegraphics[width=.8777\columnwidth]{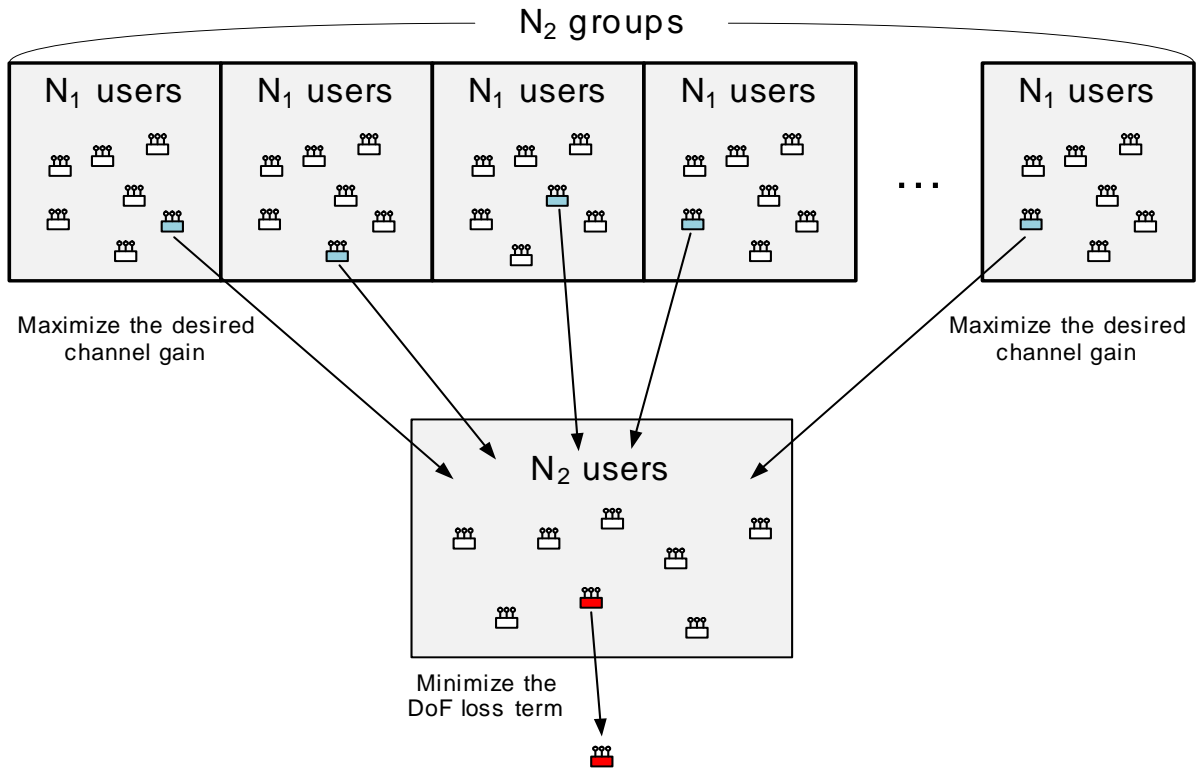}
  \caption{Two-stage user selection scheme.}
  \label{fig:user_selection_2stage}
\end{figure}

\begin{figure}[!t]
\centering
  \includegraphics[width=.6777\columnwidth]{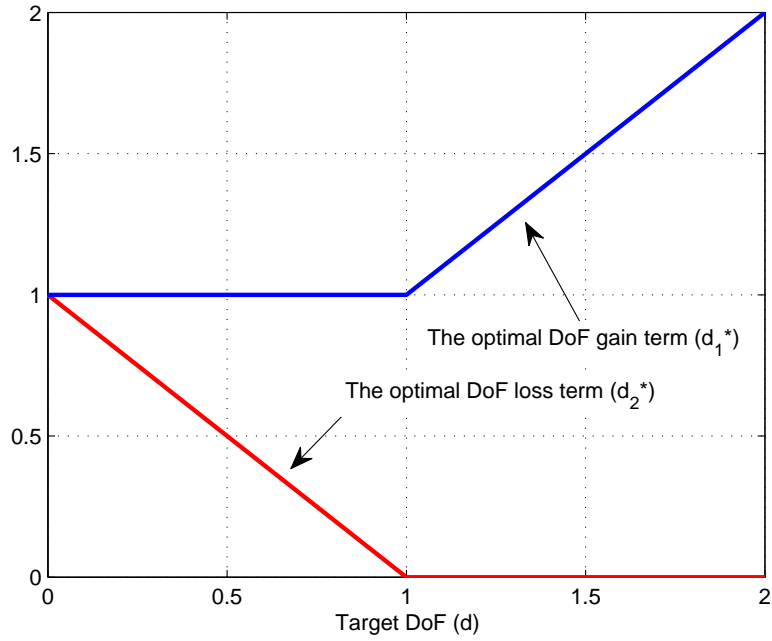}
  \caption{The optimal DoF achieving strategy $(d_1^\star, d_2^\star )$
    for the target DoF $d$}
  \label{fig:DoF_achieving_strategy}
\end{figure}

\begin{figure}[!t]
\centering
  \includegraphics[width=.6777\columnwidth]{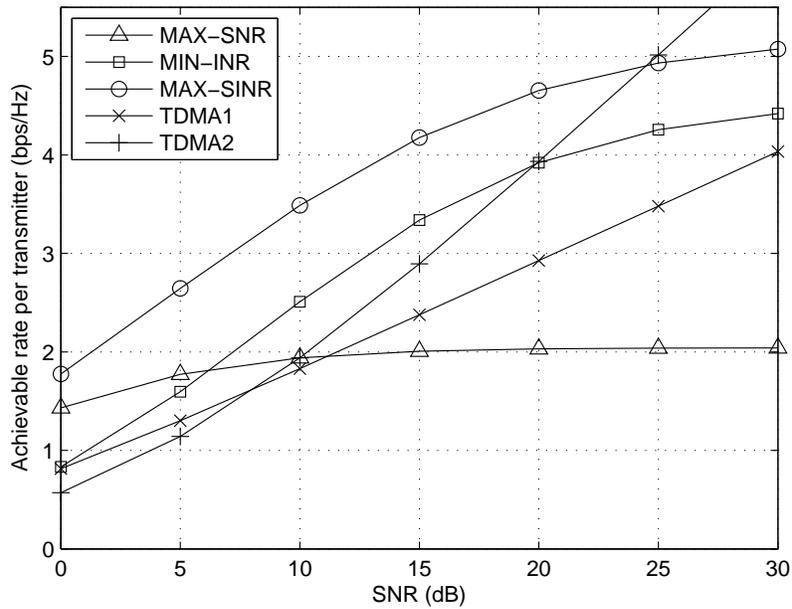}\\
  \caption{Achievable rates per transmitter using various schemes in
    IBC with $(K, N_r)=(4, 3)$ and $N=10$.}
  \label{fig:OIA_K4_a10_N10}
\end{figure}

\begin{figure}[!t]
\centering
  \includegraphics[width=.6777\columnwidth]{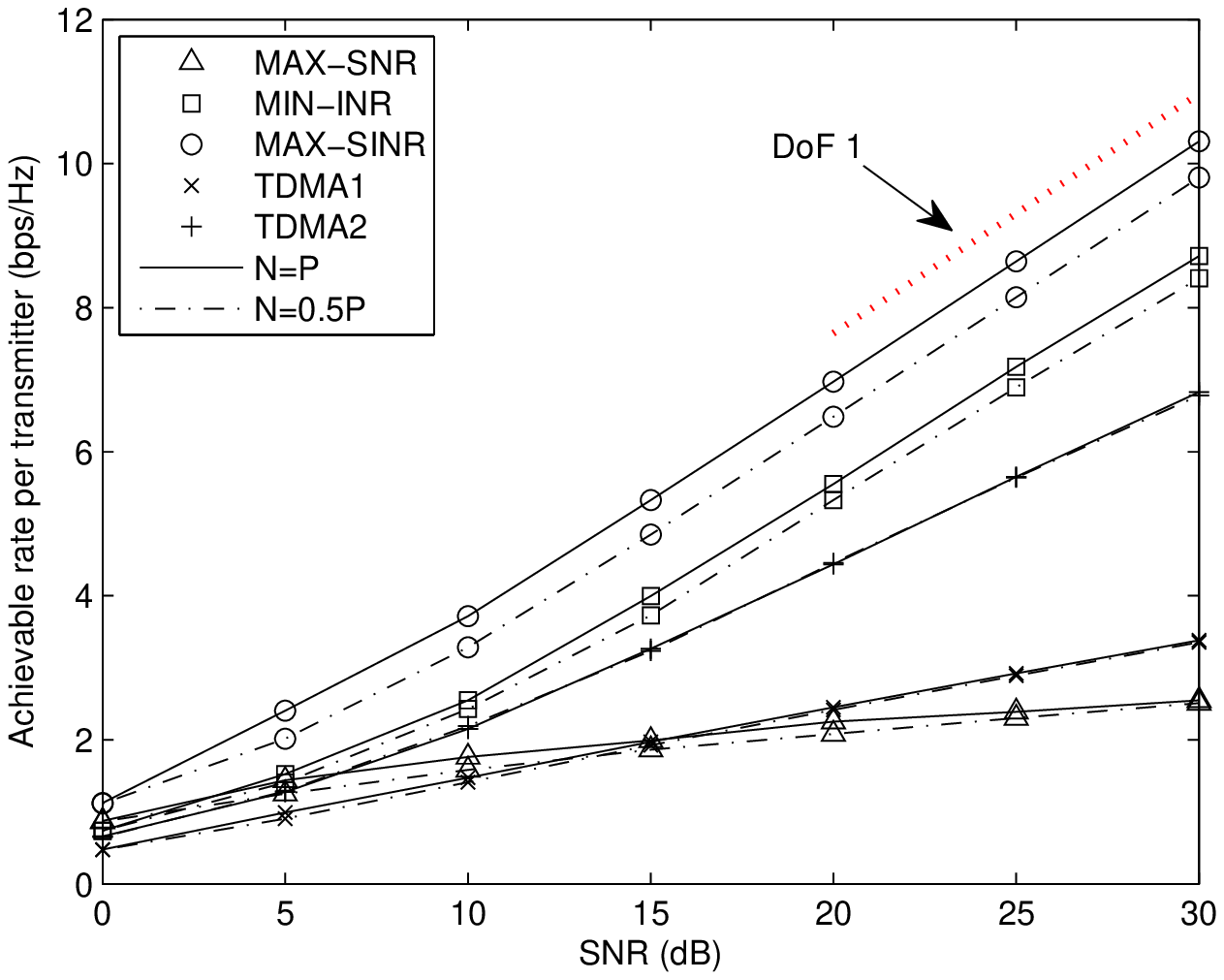}\\
  \caption{Achievable rates per transmitter using various schemes
  when  the number of users in each group scales as
  $N=P$ and $N=0.5P$, respectively.}
  \label{fig:OIA_K4_a10_kN}
\end{figure}

\begin{figure}[!t]
\centering
  \includegraphics[width=.6777\columnwidth]{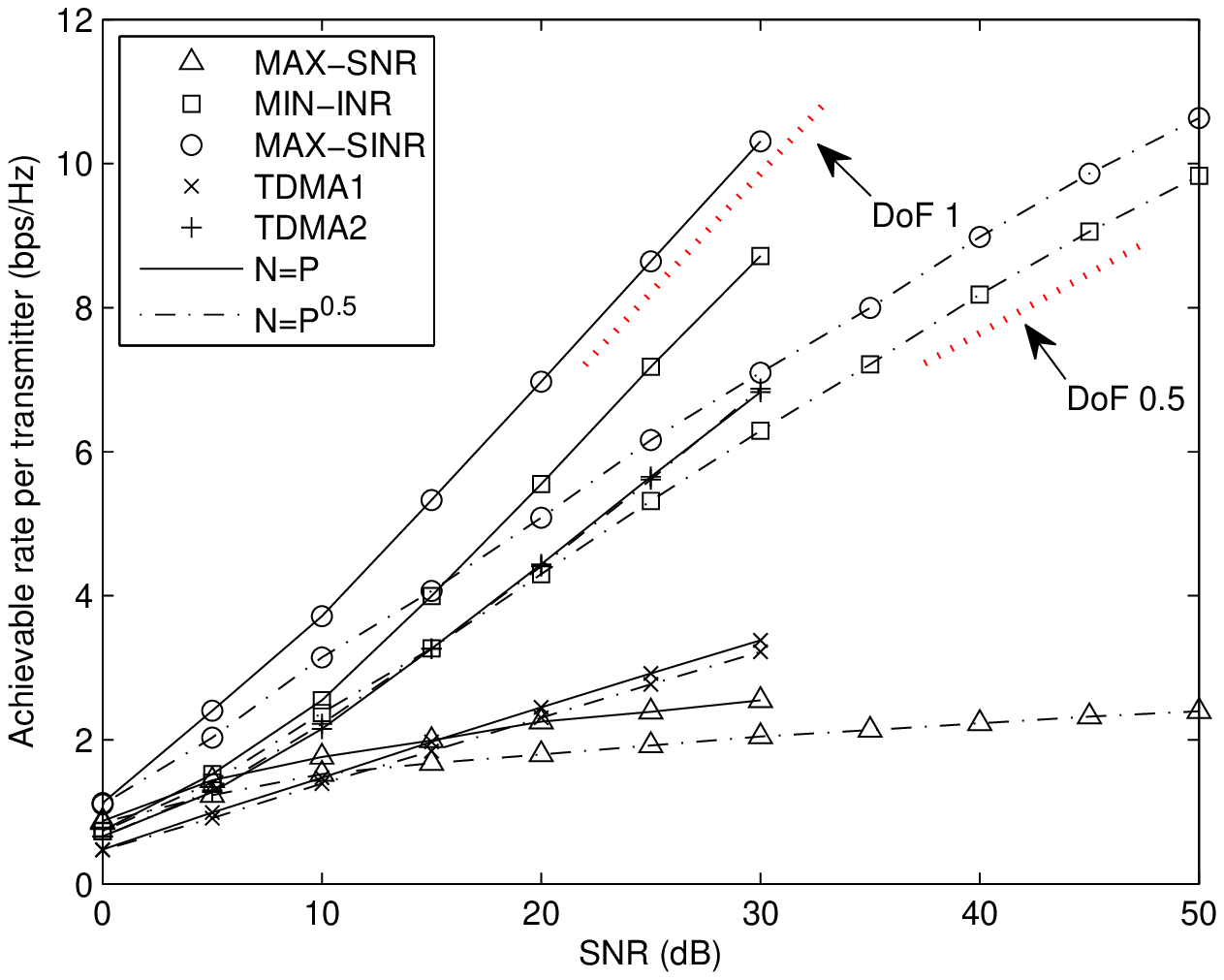}\\
  \caption{Achievable rates per transmitter using various schemes
  when the number of users in each group scales as
  $N=P^{0.5}$ and $N=P^{1}$, respectively.}
  \label{fig:OIA_K4_a10_N_powerP}
\end{figure}

\end{document}